\title{Strong Equivalence for Epistemic Logic Programs Made Easy (Extended
Version)}
\author{Wolfgang Faber \and Michael Morak\\[1ex]
Alpen-Adria-Universit\"at Klagenfurt\\
Klagenfurt, Austria
\And
Stefan Woltran\\[1ex]
TU Wien\\
Vienna, Austria
}
\newcommand{\nop}[1]{}
\newenvironment{changemargin}[2]{%
\list{}{\rightmargin#2\leftmargin#1
\parsep=0pt\topsep=0pt\partopsep=0pt}
\item[]}
{\endlist}
\newenvironment{indented}{\begin{changemargin}{1cm}{0cm}}{\end{changemargin}}
\newtheorem{theorem}{Theorem}
\newtheorem{corollary}[theorem]{Corollary}
\newtheorem{proposition}[theorem]{Proposition}
\newtheorem{lemma}[theorem]{Lemma}
\newtheorem{definition}[theorem]{Definition}
\newtheorem{example}[theorem]{Example}
\let\phi\varphi
\let\epsilon\varepsilon
\renewcommand{\models}{\vDash}
\newcommand{\calA}{\mathcal{A}}
\newcommand{\calC}{\mathcal{C}}
\newcommand{\calE}{\mathcal{E}}
\newcommand{\calI}{\mathcal{I}}
\newcommand{\calM}{\mathcal{M}}
\newcommand{\calR}{\mathcal{R}}
\newcommand{\calS}{\mathcal{S}}
\newcommand{\calY}{\mathcal{Y}}
\newcommand{\NP}{\ensuremath{\textsc{NP}}}
\newcommand{\co}{\ensuremath{\textsc{co}}}
\newcommand{\SIGMA}[2]{\ensuremath{\Sigma_{\mathit{#1}}^{\mathit{#2}}}}
\newcommand{\mods}[1]{\mathit{mods}(#1)}
\newcommand{\answersets}[1]{\mathit{AS}(#1)}
\newcommand{\semods}[1]{\calS\calE(#1)}
\newcommand{\cwvs}[1]{\mathit{cwv}(#1)}
\newcommand{\eneg}{\mathbf{not}\,}
\newcommand{\sneg}{\mathbf{\sim}\,}
\newcommand{\body}[1]{{\mathit{B}(#1)}}
\newcommand{\pbody}[1]{{\mathit{B}^+(#1)}}
\newcommand{\head}[1]{{\mathit{H}(#1)}}
\begin{document}

\maketitle

\begin{abstract}
  Epistemic Logic Programs (ELPs), that is, Answer Set Programming (ASP)
extended with epistemic operators, have received renewed interest in recent
years, which led to a flurry of new research, as well as efficient solvers. An
important question is under which conditions a sub-program can be replaced by
another one without changing the meaning, in any context. This problem is
known as strong equivalence, and is well-studied for ASP. For ELPs, this
question has been approached by embedding them into epistemic extensions of
equilibrium logics. In this paper, we consider a simpler, more direct
characterization that is directly applicable to the language used in
state-of-the-art ELP solvers. This also allows us to give tight complexity
bounds, showing that 
strong equivalence for ELPs remains
coNP-complete, as for ASP. We further use our results to provide syntactic
characterizations for tautological rules and rule subsumption for ELPs.

\end{abstract}

\section{Introduction}\label{sec:introduction}

Epistemic Logic Programs (ELPs) are an extension of the
well-established formalism of Answer Set Programming (ASP), a generic,
fully declarative logic programming language that allows for encoding
problems such that the resulting answers (called \emph{answer sets})
directly correspond to solutions of the encoded problem
\cite{cacm:BrewkaET11,ki:SchaubW18}. Negation in ASP is generally interpreted
according to the stable model semantics \cite{iclp:GelfondL88}, that
is, as negation-as-failure, also called default negation. Such a
default negation $\neg a$ of an atom $a$ is true if there is no
justification for $a$ in the same answer set, making it a ``local''
operator in the sense that it is defined relative to one considered
answer set. ELPs (in the version of \cite{ai:ShenE16}), on the other
hand, extend ASP with the epistemic negation operator $\eneg$ that
allows for a form of meta-reasoning, that is, reasoning over multiple
answer sets. Intuitively, an epistemically negated atom $\eneg a$
expresses that $a$ cannot be \emph{proven} true, meaning that it is
not true in every answer set. Thus, epistemic negation is defined
relative to a collection of answer sets, referred to as a \emph{world
  view}. 
Deciding whether a world view exists, is
\SIGMA{P}{3}-complete~\cite{ai:ShenE16}, one level higher on the
polynomial hierarchy than deciding answer set
existence~\cite{amai:EiterG95}.

\citeauthor{aaai:Gelfond91} \shortcite{aaai:Gelfond91,amai:Gelfond94}
recognized epistemic negation as a desired construct for ASP early on
and introduced the modal operators $\mathbf{K}$ (``known'' or
``provably true'') and $\mathbf{M}$ (``possible'' or ``not provably
false'') to address this. $\mathbf{K}a$ and $\mathbf{M}a$ correspond
to $\neg \eneg a$ and $\eneg \neg a$, respectively.  Renewed interest
in recent years has revealed several flaws in the original semantics,
and various new approaches
(cf.\ e.g.\ \cite{lpnmr:Gelfond11,birthday:Truszczynski11,diss:Kahl14,ijcai:CerroHS15,ai:ShenE16})
were proposed.
Also the development of ELP solving
systems \cite{logcom:KahlWBGZ15,ijcai:SonLKL17,ijcai:BichlerMW18} has gained momentum.

A main application and major motivation of ELPs is the formalization
of the Closed World Assumption (CWA), as pointed out already by
\citeauthor{aaai:Gelfond91}
\shortcite{aaai:Gelfond91}. Interestingly, there are two
formalizations of the CWA using ELPs in the literature. The first,
given in \cite{aaai:Gelfond91}%
\footnote{\citeauthor{aaai:Gelfond91}
\shortcite{aaai:Gelfond91} proposed the rule $\sneg p \gets \neg \mathbf{M} p$, where $\sneg$ is a third kind of negation, usually
referred to as strong negation, not considered in this paper. It can be
simulated by replacing occurrences of $\sneg p$ by a fresh atom
$p'$ and adding a constraint rule $\gets p, p'$ that excludes $p$ and
$p'$ to hold simultaneously.
}%
, shall be referred to as Gelfond-CWA and introduces one rule for each atom $p$:
\begin{equation}\label{eq:gelfond-cwa}
  p' \gets \neg \eneg \neg p.
\end{equation}
Intuitively, this says  that $p'$ (i.e.\ the negation of $p$) shall be true if
there is no possible world where $p$ is true.
\citeauthor{ai:ShenE16} \shortcite{ai:ShenE16} propose a different rule for CWA, which we will refer to as Shen-Eiter-CWA for $p$:
\begin{equation}\label{eq:shen-eiter-cwa}
  p' \gets \eneg p.
\end{equation}
Again, intuitively, this formulation makes $p'$ true iff there is a possible
world where $p$ is false.

A natural question is whether $(\ref{eq:gelfond-cwa})$ and $(\ref{eq:shen-eiter-cwa})$ yield the same results in any context, which will be answered in this paper. 
%
%
To this end, we study 
notions of equivalence between ELPs. For instance, two ELPs
$\Pi_1$ and $\Pi_2$ are strongly ELP-WV-equivalent iff, for any third ELP $\Pi$, the
combined programs $\Pi_1 \cup \Pi$ and $\Pi_2 \cup \Pi$ are equivalent (i.e.\
have the same world views). This notion is useful to transform ELPs into
equivalent versions where one wants to verify that a local change preserves
equivalence without considering the whole program. Other notions of strong equivalence can be defined for comparing candidate world views (rather than world views) or considering only the addition of programs that do not contain epistemic operators.
In (plain) ASP, strong equivalence is a well-studied problem.
\citeauthor{tocl:LifschitzPV01} 
\shortcite{tocl:LifschitzPV01} 
have provided
an elegant  characterization of the problem in terms of Heyting's logic of here-and-there (HT).
Strong equivalence 
also proved useful as a means to simplify programs \cite{iclp:CabalarPV07,jair:LinC07,jancl:EiterFPTW13}.

It has been shown in
\cite{lpnmr:WangZ05} and \cite{ijcai:CerroHS15}, among others, that epistemic
extensions of 
logic HT
can be used to characterize strong equivalence
of ELPs. These approaches, however, are very general and lead to a very abstract
characterization that cannot be immediately used for ELPs written in the
language of current solving systems. It is also not easy to obtain tight
complexity results in such a general setting. The semantics considered in \cite{lpnmr:WangZ05} is the original one of \cite{aaai:Gelfond91}, which is now 
considered obsolete, while \cite{ijcai:CerroHS15} consider a different semantics
from the one in \cite{ai:ShenE16}, which is what we use. An in-depth comparison of the
differences in the semantics can also be found in \cite{ai:ShenE16}.

In this paper, we therefore propose a simpler, more direct
characterization for the well-understood ELP semantics given in
\cite{ai:ShenE16}. Our characterization is in the spirit of \cite{tplp:Turner03}, which
is useful to study ELPs written in the input language of the ELP
solvers mentioned above, as it can be directly applied in this
setting. This also allows us to obtain tight
complexity bounds for checking strong equivalence of ELPs. We further
investigate several use cases of strong equivalence by using our
technique to syntactically characterize tautological rules and rule
subsumptions.

\nop{
\paragraph{Related Work.} Strong equivalence for ASP and several extensions has
been studied for different formalisms: the landmark paper
\cite{tocl:LifschitzPV01} showed that the logic of here-and-there can be used to
characterize strong equivalence for logic programs; \cite{tplp:Turner03}
simplified these notions and gave a more direct characterization;
\cite{amai:EiterFFW07} proposes a characterization for logic programs with weak
constraints; \citeauthor{kr:FaberTW08} \shortcite{kr:FaberTW08} studies
different notion of equivalence for logic programs with ordered disjunction;
this was further generalized in \cite{jair:FaberTW13}.  Other research on strong
equivalence includes results for streaming programs \cite{ijcai:BeckDE16} and
temporal programs \cite{jelia:AguadoCPV08}.
}

\paragraph{Contributions.} The main contributions of this paper are the
following:
\begin{itemize}
  \item We propose different notions of strong equivalence of ELPs 
 	(based on the input language of 
        ELP solvers)
	that strictly generalize strong equivalence for plain ASP.

  \item We provide a model-theoretic characterization of strong equivalence for ELPs
	showing that the different notions proposed coincide.
  \item We use our characterization to show that, surprisingly, testing strong
    equivalence of two ELPs remains in \co\NP, that is, the complexity of this
    test does not increase when considering ELPs instead of plain ASP.
  \item Finally, we use our proposed notion to syntactically characterize
    tautological ELP rules and when one ELP rule subsumes another.
\end{itemize}

\nop{
\paragraph{Organization.} The remainder of the paper is structured as follows.
In Section~\ref{sec:preliminaries}, we give an overview of ASP, ELPs, and strong
equivalence for the former. In Section~\ref{sec:properties}, several useful
properties of ELPs are established that are then used in
Section~\ref{sec:strongequivalence} to propose our characterization for strong
equivalence of ELPs. Section~\ref{sec:complexity} then deals with the
computational complexity of strong equivalence checking. Finally, we investigate
several use cases, namely rule tautologies and rule subsumption, in
Section~\ref{sec:casestudies}, and close with some concluding remarks in
Section~\ref{sec:conclusions}.
}

\section{Preliminaries}\label{sec:preliminaries}

\paragraph{Answer Set Programming (ASP).} A \emph{ground logic program} with
nested negation (also called answer set program, ASP program, or, simply, logic
program) is a pair $\Pi = (\calA, \calR)$, where $\calA$ is a set of
propositional (i.e.\ ground) atoms and $\calR$ is a set of rules of the form
\begin{equation}\label{eq:rule}
  a_1\vee \cdots \vee a_l \leftarrow a_{l+1}, \ldots, a_m, \neg \ell_1, \ldots,
  \neg \ell_n;
\end{equation}
where the comma symbol stands for conjunction, $0 \leq l \leq m$, $0 \leq n$,
$a_i \in \calA$ for all $1 \leq i \leq m$, and each $\ell_i$ is a
\emph{literal}, that is, either an atom $a$ or its (default) negation $\neg a$
for any atom $a \in \calA$. Note that, therefore, doubly negated atoms
may occur. We will sometimes refer to such rules as \emph{standard rules}.
Each rule $r \in \calR$ of form~(\ref{eq:rule}) consists of a \emph{head}
$\head{r} = \{ a_1,\ldots,a_l \}$ and a \emph{body} $\body{r} =
\{a_{l+1},\ldots,a_m, \neg \ell_1, \ldots, \neg \ell_n \}$. We denote the
\emph{positive} body by $\pbody{r} = \{ a_{l+1}, \ldots, a_m \}$.

An \emph{interpretation} $I$ (over $\calA$) is a set of atoms, that is, $I \subseteq \calA$.  A
literal $\ell$ is true in an interpretation $I \subseteq \calA$, denoted $I
\models \ell$, if $a \in I$ and $\ell = a$, or if $a \not\in I$ and $\ell = \neg
a$; otherwise $\ell$ is false in $I$, denoted $I \not\models \ell$. Finally, for
some literal $\ell$, we define that $I \models \neg \ell$ if $I \not\models
\ell$. This notation naturally extends to sets of literals. An interpretation
$M$ is called a \emph{model} of $r$, denoted $M \models r$, if, whenever $M
\models \body{r}$, it holds that $M \models \head{r}$. We denote the set of
models of $r$ by $\mods{r}$; the models of a logic program $\Pi= (\calA,\calR)$
are given by $\mods{\Pi} = \bigcap_{r \in \calR} \mods{r}$. We also write
$I\models r$ (resp.\ $I\models \Pi$) if $I\in\mods{r}$ (resp.\
$I\in\mods{\Pi}$).

The GL-reduct $\Pi^I$ of a 
logic program $\Pi = (\calA, \calR)$ with
respect to an interpretation $I$ is the program $
(\calA, \calR^I)$,
where $\calR^I = \{ \head{r} \leftarrow \pbody{r} \mid r \in \calR, \forall \neg
\ell \in \body{r} : I \models \neg \ell \}$.

\begin{definition}\label{def:answerset}
  \cite{iclp:GelfondL88,ngc:GelfondL91,amai:LifschitzTT99} $M \subseteq \calA$
  is an \emph{answer set} of a logic program $\Pi$ if (1) $M \in \mods{\Pi}$ and
  (2) there is no subset $M' \subset M$ such that $M' \in \mods{\Pi^M}$.
\end{definition}

The set of answer sets of a logic program $\Pi$ is denoted by $\answersets{\Pi}$.
The \emph{consistency problem} of ASP, that is, to decide whether for a given
logic program $\Pi$ it holds that $\answersets{\Pi}\neq\emptyset$, is
\SIGMA{P}{2}-complete~\cite{amai:EiterG95}, and remains so also in the case
where doubly negated atoms are allowed in rule bodies~\cite{tplp:PearceTW09}.

\paragraph{Strong Equivalence for Logic Programs.} Two logic programs $\Pi_1 =
(\calA, \calR_1)$ and $\Pi_2 = (\calA, \calR_2)$ are \emph{equivalent} iff they
have the same set of answer sets, that is, $\answersets{\Pi_1} =
\answersets{\Pi_2}$. The two logic programs are \emph{strongly equivalent} iff
for any third logic program $\Pi = (\calA, \calR)$ it holds that the combined
logic program $\Pi_1 \cup \Pi = (\calA, \calR_1 \cup \calR)$ is equivalent to
the combined logic program $\Pi_2 \cup \Pi = (\calA, \calR_2 \cup \calR)$. Note
that strong equivalence implies equivalence, since the empty program
$\Pi=(\calA,\emptyset)$ would already contradict strong equivalence for two
non-equivalent programs $\Pi_1$ and $\Pi_2$.

An \emph{SE-model} \cite{tplp:Turner03} of a logic program $\Pi = (\calA, \calR)$ 
is a tuple 
$(X, Y)$, where 
$X \subseteq Y\subseteq\calA$, 
$Y \models \Pi$, 
and $X \models \Pi^Y$. The set of SE-models of a logic program
$\Pi$ is denoted $\semods{\Pi}$. Note that for every model $Y$ of $\Pi$, $(Y, Y)$
is an SE-model of $\Pi$, since $Y \models \Pi$ implies that $Y \models \Pi^Y$.

Two logic programs (over the same atoms) are strongly equivalent iff they have
the same SE-models \cite{tocl:LifschitzPV01,tplp:Turner03}. Checking whether two logic programs
are strongly equivalent is
\co\NP-complete \cite{tplp:Turner03,tplp:PearceTW09}.

\paragraph{Epistemic Logic Programs.} An \emph{epistemic literal} is a formula
$\eneg \ell$, where $\ell$ is a literal and $\eneg$ is the epistemic negation
operator. A \emph{ground epistemic logic program (ELP)} is a triple $\Pi =
(\calA, \calE, \calR)$, where $\calA$ is a set of propositional atoms, $\calE$
is a set of epistemic literals over the atoms $\calA$, and $\calR$ is a set of
\emph{ELP rules}, which are
%
%
\begin{equation*}
   a_1\vee \cdots \vee a_k \leftarrow \ell_1, \ldots, \ell_m, \xi_1, \ldots,
   \xi_j, \neg \xi_{j + 1}, \ldots, \neg \xi_{n},
\end{equation*}
where each $a_i\in\calA$ is an atom, each $\ell_i$ is a literal, and each $\xi_i
\in \calE$ is an epistemic literal. 
Note that usually $\calE$ is defined implicitly to be the set of all epistemic
literals appearing in the rules $\calR$; however, making the domain of epistemic
literals explicit will prove useful for our purposes.
The \emph{union} 
of two ELPs $\Pi_1 = (\calA_1,
\calE_1, \calR_1)$ and $\Pi_2 = (\calA_2, \calE_2, \calR_2)$ is the ELP $\Pi_1
\cup \Pi_2 = (\calA_1 \cup \calA_2, \calE_1 \cup \calE_2, \calR_1 \cup
\calR_2)$.

For a set $\calE$ of epistemic literals, a subset $\Phi \subseteq \calE$ of
epistemic literals is called an \emph{epistemic guess} (or, simply, a
\emph{guess}). The following definition provides a way to check whether a set of
interpretations is compatible with a guess~$\Phi$.

\begin{definition}\label{def:compatibility}
  Let $\calA$ be a set of atoms, $\calE$ be a set of epistemic literals over
  $\calA$, and $\Phi \subseteq \calE$ be an epistemic guess. A set $\calI$ of
  interpretations over $\calA$ is called \emph{$\Phi$-compatible w.r.t.\
  $\calE$}, iff 
  \begin{enumerate}
      \item\label{def:compatibility:1} $\calI \neq \emptyset$;
      \item\label{def:compatibility:2} for each epistemic literal $\eneg \ell
	\in \Phi$, there exists an interpretation $I \in \calI$ such that $I
	\not\models \ell$; and
      \item\label{def:compatibility:3} for each epistemic literal $\eneg \ell
	\in \calE \setminus \Phi$, for all interpretations $I \in \calI$ it
	holds that $I \models \ell$.
  \end{enumerate}
\end{definition}

For an ELP $\Pi = (\calA, \calE, \calR)$, the \emph{epistemic reduct}
\cite{ai:ShenE16} of the program $\Pi$ w.r.t.\ a guess $\Phi$, denoted $\Pi^\Phi$,
consists of the rules $\calR^\Phi=\{ r^\neg \mid r \in \calR \}$, where $r^\neg$ is defined
as the rule $r \in \calR$ where all occurrences of epistemic literals $\eneg
\ell \in \Phi$ are replaced by $\top$, and all remaining epistemic negation
symbols $\eneg$ are replaced by default negation $\neg$. Note that, after this
transformation, $\Pi^\Phi=(\calA,\calR^\Phi)$ is a logic program without epistemic
negation\footnote{In fact, $\Pi^\Phi$ may contain triple-negated atoms
$\neg\neg\neg a$. But, according to the definition of the GL-reduct in
\cite{amai:LifschitzTT99}, such formulas are equivalent to simple negated atoms
$\neg a$, and we treat them as such.}. This leads to the following, central
definition.

\begin{definition}\label{def:candidateworldview}
  Let $\Pi = (\calA, \calE, \calR)$ be an ELP. A set $\calM$ of interpretations
  over $\calA$ is a \emph{candidate world view (CWV)} of $\Pi$ if there is an
  epistemic guess $\Phi \subseteq \calE$ such that $\calM =
  \answersets{\Pi^\Phi}$ and $\calM$ is compatible with $\Phi$ w.r.t.\ $\calE$.
  The set of all CWVs of an ELP $\Pi$ is denoted by $\cwvs{\Pi}$. 
\end{definition}

Let us reconsider the CWA formulations as examples.

\begin{example}\label{ex:cwa-running1}
  Let $\calA_C=\{p,p'\}$, $\calE_C=\{\eneg p, \eneg \neg p\}$,
  $\Pi_G=(\calA_C,\calE_C,\calR_G)$ with $\calR_G$ containing
  only rule $(\ref{eq:gelfond-cwa})$, and
  $\Pi_S=(\calA_C,\calE_C,\calR_S)$ with $\calR_S$
  containing only rule $(\ref{eq:shen-eiter-cwa})$.

  We obtain $\cwvs{\Pi_G}=\{ \{ \{p'\} \} \}$ as guess $\Phi=\{ \eneg p \}$ is compatible with $\answersets{\Pi_G^{\Phi}=\{p' \gets \neg p\}} = \{\{p'\}\}$, while no other guesses are compatible with the answer sets of the respective epistemic reducts. We also obtain $\cwvs{\Pi_S}=\{ \{ \{p'\} \} \}$ as $\Phi$ is compatible with $\answersets{\Pi_S^{\Phi}=\{p' \gets \top \}} = \{\{p'\}\}$, while no other guesses are compatible with the answer sets of the respective epistemic reducts.
\end{example}

Following the principle of knowledge minimization, 
\citeauthor{ai:ShenE16} 
\shortcite{ai:ShenE16} 
define a
\emph{world view} as follows.

\begin{definition}\label{def:worldview}
  Let $\Pi = (\calA, \calE, \calR)$ be an ELP. 
  $\calC\in\cwvs{\Pi}$ is called
  \emph{world view (WV)} of $\Pi$ if 
  its associated guess $\Phi$ is subset-maximal, i.e.\
  there is no $\calC'\in\cwvs{\Pi}$ with associated guess $\Phi'\supset \Phi$.
  %
\end{definition}

Note that in Example~\ref{ex:cwa-running1} there is only one CWV per program; hence the
associated guesses are subset-maximal, and the sets of CWVs and WVs coincide.

One of the main reasoning tasks regarding ELPs is the \emph{world view existence
problem}, 
that is, given an ELP $\Pi$,
decide whether a WV (or, equivalently, CWV) exists. This problem is 
\SIGMA{3}{P}-complete \cite{ai:ShenE16}.


\nop{
In this section, we will state several observations and properties about ELPs
that will help us to define and investigate the notion of strong equivalence for
ELPs. Several of these are not difficult to establish, but for the sake of
completeness we want to state them explicitly, as they will greatly simplify the
presentation of our main results in Section~\ref{sec:strongequivalence}.
}

We close this section with a statement that shows that extending $\calA$ or $\calE$ of an ELP does not change their CWVs (and hence also not their WVs).

\begin{theorem}\label{thm:atomelitdomains}
  Let $\Pi = (\calA, \calE, \calR)$ be an ELP and let $\Pi' = (\calA', \calE',
  \calR)$ be an ELP with the same set of rules, but with $\calA' \supset \calA$ and
  $\calE' \supset \calE$. Then, $\cwvs{\Pi} = \cwvs{\Pi'}$.
\end{theorem}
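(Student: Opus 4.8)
The plan is to prove the two inclusions $\cwvs{\Pi}\subseteq\cwvs{\Pi'}$ and $\cwvs{\Pi'}\subseteq\cwvs{\Pi}$ separately. Both rest on a single observation about plain logic programs: since every rule in $\calR$ mentions only atoms from $\calA$, an atom $a\in\calA'\setminus\calA$ occurs in no rule and can therefore belong to no answer set. Indeed, if $a\in M$ for some answer set $M$, then $M\setminus\{a\}$ is still a model of the GL-reduct (the truth of every rule is independent of $a$), contradicting minimality. I would phrase this precisely as follows: for any fixed reduced rule set $\calS$ mentioning only atoms of $\calA$, the logic programs $(\calA,\calS)$ and $(\calA',\calS)$ have the same answer sets, each a subset of $\calA$. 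This is routine once the above remark is in place.

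The second ingredient is that the epistemic reduct $\Pi^\Phi$ depends only on the epistemic literals actually occurring in $\calR$; write $\calE_\calR$ for this set and note $\calE_\calR\subseteq\calE\subseteq\calE'$. Hence two guesses that agree on $\calE_\calR$ induce the very same reduced rule set. Combining both ingredients: if $\Phi\subseteq\calE$ and $\Phi'\subseteq\calE'$ satisfy $\Phi'\cap\calE=\Phi$, then $\Phi'\cap\calE_\calR=\Phi\cap\calE_\calR$, so $(\Pi')^{\Phi'}$ and $\Pi^\Phi$ share their reduced rules and therefore $\answersets{(\Pi')^{\Phi'}}=\answersets{\Pi^\Phi}$.

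For $\cwvs{\Pi}\subseteq\cwvs{\Pi'}$, take a CWV $\calM$ of $\Pi$ with guess $\Phi$. I would extend $\Phi$ to $\Phi'\subseteq\calE'$ by adding exactly those new literals $\eneg\ell\in\calE'\setminus\calE$ for which some $I\in\calM$ has $I\not\models\ell$. Then $\Phi'\cap\calE=\Phi$, so by the paragraph above $\answersets{(\Pi')^{\Phi'}}=\calM$, and it remains to verify the three conditions of Definition~\ref{def:compatibility} for $\calM$ w.r.t.\ $\calE'$. The conditions for literals already in $\calE$ transfer directly from $\Phi$-compatibility; for the new literals they hold by construction, which is in fact the unique placement preserving compatibility (if some $I$ falsifies $\ell$ the literal must enter $\Phi'$, otherwise it must stay out). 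In particular, a literal $\eneg a$ over a fresh atom $a$ is forced into $\Phi'$ and $\eneg\neg a$ is forced out, consistent with $a\notin I$ for all $I\in\calM$.

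Conversely, for a CWV $\calM'$ of $\Pi'$ with guess $\Phi'$, the first observation already shows $\calM'$ consists of interpretations over $\calA$, so I would restrict the guess to $\Phi=\Phi'\cap\calE$; then $\Phi\cap\calE_\calR=\Phi'\cap\calE_\calR$ yields $\answersets{\Pi^\Phi}=\calM'$, and the compatibility conditions w.r.t.\ $\calE$ follow by restriction, using $\calE\setminus\Phi\subseteq\calE'\setminus\Phi'$. The main obstacle is not any single deep step but the bookkeeping in the forward direction: one must check that every new epistemic literal---especially those built from the fresh atoms of $\calA'\setminus\calA$---can always be placed so as to preserve all three compatibility conditions, which hinges precisely on the fact that each $I\in\calM$ is a subset of $\calA$ and hence falsifies every positive literal and satisfies every negative literal over a fresh atom.
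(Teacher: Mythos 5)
Your proof is correct and takes essentially the same route as the paper's: the paper also splits the claim into a fresh-atom part (Proposition~\ref{prop:atomdomainelp}, resting on the folklore fact that atoms absent from all rules occur in no answer set) and a fresh-epistemic-literal part (Proposition~\ref{prop:elitdomain}), where a new literal $\eneg \ell$ is placed into the guess exactly when some model of the (unchanged) reduct falsifies $\ell$ --- precisely your ``unique placement'' rule. The only difference is presentational: the paper enlarges $\calE$ one epistemic literal at a time and leaves the reverse inclusion largely implicit, whereas you extend the guess by all new literals at once and spell out both inclusions explicitly.
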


The proof can be found in Appendix~\ref{sec:thm:atomelitdomains:proof}.
%
The above theorem implies that, given two
ELPs $\Pi_1 = (\calA_1, \calE_1, \calR_1)$ and $\Pi_2 = (\calA_2, \calE_2,
\calR_2)$, we can always assume that $\calA_1 = \calA_2$ and $\calE_1 = \calE_2$
(since the domains can be extended without changing the CWVs, as per the above
theorem). 


\section{Strong Equivalence for ELPs}\label{sec:strongequivalence}

In this section, we will investigate notions of equivalence of ELPs, in
particular, focusing on how to extend the concept of strong
equivalence~\cite{tocl:LifschitzPV01,tplp:Turner03} from logic programs to ELPs.
We will start by defining (ordinary) equivalence of two ELPs. 

\begin{definition}\label{def:equivalence}
  Two ELPs are \emph{WV-equivalent} (resp.\ \emph{CWV-equivalent}) iff their
  WVs (resp.\ CWVs) coincide.
\end{definition}

We observe that CWV-equivalence is the stronger notion, as it
immediately implies WV-equivalence. 
Moreover, for two ELPs to be
equivalent as defined above, not only must the set of guesses leading to WVs/CWVs
be the same, but also the answer sets in each of these WVs/CWVs. 


We now continue by defining strong equivalence for ELPs. One motivation for such
a kind of equivalence is modularization: we want to be able to replace a
sub-program by another one such that the semantics (i.e.\ WVs or CWVs) of the
overall program does not change. Based on the two equivalence notions defined
above and using ideas from work done in the area of logic programs \cite{amai:EiterFFW07},
we propose four notions of strong equivalence for ELPs.

\begin{definition}\label{def:strongequivalence}
  Let $\Pi_1$ and $\Pi_2$ be two ELPs. 
  $\Pi_1$ and $\Pi_2$ are
  \begin{itemize}
    \item 
	\emph{strongly ELP-WV-equivalent} iff, 
      		for every ELP $\Pi$, 
	$\Pi_1 \cup \Pi$ and $\Pi_2 \cup \Pi$ are WV-equivalent;
    \item 
	\emph{strongly ASP-WV-equivalent} iff, 
      for every (plain) logic program $\Pi$, 
	$\Pi_1 \cup \Pi$ and $\Pi_2 \cup \Pi$ are
      WV-equivalent;
    \item 
	\emph{strongly ELP-CWV-equivalent} iff, 
      		for every ELP $\Pi$, 
	  $\Pi_1 \cup \Pi$ and $\Pi_2 \cup \Pi$ are CWV-equivalent;
    \item 
	\emph{strongly ASP-CWV-equivalent} iff, 
      for every (plain) logic program $\Pi$, 
	$\Pi_1 \cup \Pi$ and $\Pi_2 \cup \Pi$ are CWV-equivalent.
  \end{itemize}
\end{definition}

Having defined these equivalence notions for ELPs, the main aim of this section
is to characterize strong equivalence in a similar fashion as was done for logic
programs by \citeauthor{tplp:Turner03} \shortcite{tplp:Turner03}. Note that one
could be tempted to define strong equivalence for ELPs simply in terms of
\citeauthor{tplp:Turner03}'s SE-models of the epistemic reducts, for each
possible epistemic guess. However, this approach does not capture our
intentions, as the following example shows:

\begin{example}\label{ex:new}
  Take the two ELPs $\Pi_1 = (\calA, \calE, \calR_1)$ and $\Pi_2 = (\calA,
  \calE, \calR_2)$ with $\calR_1 = \{ p \gets \eneg p \}$, $\calR_2 = \{ p
  \gets \neg p \}$ 
  and $\calE = \{ \eneg p \}$. Now, for the guess
  $\Phi_1 = \emptyset$, note that $\Pi_1^{\Phi_1} = \Pi_2^{\Phi_1}$ and hence,
  trivially, the SE-models are also the same. However, for the guess $\Phi_2 =
  \calE$, $\Pi_1^{\Phi_2}$ consists of the rule $p \gets \top$, while
  $\Pi_2^{\Phi_2}$ reduces to $p \gets \neg p$. It can be checked that the SE-models of these
  two epistemic reducts w.r.t.\ $\Phi_2$ are not the same, 
  hence these two
  programs are not strongly equivalent in the sense of \cite{tplp:Turner03}.
  However, it turns out that the guess $\Phi_2$ can never give rise to a CWV,
  since it requires that there is an answer set not containing $p$, but both 
  $\Pi_1^{\Phi_2}$ and $\Pi_2^{\Phi_2}$ require that $p$ is true in all answer
  sets of the CWV. Hence, it seems that the epistemic reducts for $\Phi_2$
  should not have any bearing on evaluating strong equivalence.
\end{example}

The example above implies that, when establishing strong equivalence for ELPs,
we should discard guesses that can never give rise to a CWV. We formalize this
as follows:

\begin{definition}\label{def:realizability}
  Let $\calI$ be a set of interpretations over a domain of atoms $\calA$, let
  $\calE$ be a set of epistemic literals over $\calA$, and $\Phi \subseteq
  \calE$ be a guess. Then, $\Phi$ is \emph{realizable} in $\calI$ iff there is a
  subset $\calI' \subseteq \calI$ such that $\calI'$ is compatible with $\Phi$ w.r.t.\ $\calE$.
\end{definition}

Given an ELP $\Pi = (\calA, \calE, \calR)$ and a guess $\Phi \subseteq \calE$,
we say that $\Phi$ is \emph{realizable in $\Pi$} iff $\Phi$ is realizable in the
set of models of $\Pi^\Phi$. We say that $\Phi$ is \emph{realizable in a set of
SE-models $\calS$} iff $\Phi$ is realizable in the set $\{ Y \mid (X, Y) \in
\calS \}$. 
We are now ready to define our central
construct, the SE-function $\calS\calE_\Pi$ of an ELP, which 
will 
be the key concept to characterize strong equivalence for ELPs. Note that
realizability plays an important role in this.

\begin{definition}\label{def:sefunction}
  The \emph{SE-function} $\calS\calE_\Pi(\cdot)$ of an ELP $\Pi = (\calA, \calE, \calR)$ 
  maps guesses $\Phi \subseteq \calE$ for $\Pi$ to sets of SE-models as follows.
$$\calS\calE_\Pi(\Phi) = \left\{
  \begin{array}{@{}lr@{}}
    \semods{\Pi^\Phi} & \text{if } \Phi \text{ realizable in } \Pi\\
    \emptyset & \text{otherwise.}
  \end{array}
  \right.$$
\end{definition}

\begin{example}\label{ex:cwa-running3}
  Recall that programs $\Pi_G$ and $\Pi_S$ of Example~\ref{ex:cwa-running1} are both CWV- and WV-equivalent.
  However, we find that
  their respective SE-functions differ: 
  $\calS\calE_{\Pi_G}(\{ \eneg p \})$ contains the tuple $(\emptyset,\{
  p \})$, but $\calS\calE_{\Pi_S}(\{ \eneg p \})$
  does not; yet $\{ \eneg p \}$ is realizable in both $\Pi_G$ and $\Pi_S$.
\end{example}

Note that if $\Phi$ is realizable in $\Pi$, then $\{ Y \mid (X, Y) \in
\calS\calE_\Pi(\Phi) \} = \mods{\Pi^\Phi}$. Before proceeding to our main
results, we first state some observations that can be made about the SE-function
of an ELP.

\begin{lemma}\label{lem:worldviewfromsefunc}
  Let $\Pi = (\calA, \calE, \calR)$ be an ELP with $\calS\calE_\Pi$ its
  SE-function.  Further, let $\calM$ be a set of interpretations, and $\Phi
  \subseteq \calE$ be a guess. Then, $\calM$ is a CWV of $\Pi$ w.r.t.\ $\Phi$
  iff $\calM = \{ Y \mid (Y, Y) \in \calS\calE_\Pi(\Phi), \neg\exists X \subset
  Y \ (X, Y) \in \calS\calE_\Pi(\Phi) \}$ and $\calM$ is compatible with $\Phi$.
\end{lemma}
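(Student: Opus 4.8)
The plan is to unfold both sides of the claimed equivalence directly against Definition~\ref{def:candidateworldview}, using the structure of $\calS\calE_\Pi(\Phi)$ from Definition~\ref{def:sefunction} and the characterization of answer sets via SE-models. Recall that by Definition~\ref{def:candidateworldview}, $\calM$ is a CWV of $\Pi$ w.r.t.\ $\Phi$ iff $\calM = \answersets{\Pi^\Phi}$ and $\calM$ is compatible with $\Phi$ w.r.t.\ $\calE$. Since compatibility appears verbatim on both sides of the stated iff, the only real content is to show that, whenever the compatibility condition can hold, the set $\calM = \answersets{\Pi^\Phi}$ coincides with the set $\{ Y \mid (Y, Y) \in \calS\calE_\Pi(\Phi),\ \neg\exists X \subset Y\ (X, Y) \in \calS\calE_\Pi(\Phi) \}$. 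So my first step is to reduce the lemma to establishing this equality of interpretation sets under the hypothesis that $\calM$ is compatible with $\Phi$.

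The second step is to relate answer sets to maximal-$Y$ SE-models. I would recall the standard fact (implicit in the SE-model machinery of \citeauthor{tplp:Turner03} \shortcite{tplp:Turner03}) that $Y \in \answersets{\Pi^\Phi}$ iff $(Y,Y) \in \semods{\Pi^\Phi}$ and there is no $X \subsetneq Y$ with $(X,Y) \in \semods{\Pi^\Phi}$: indeed, $(Y,Y)$ being an SE-model encodes $Y \models \Pi^\Phi$, i.e.\ condition~(1) of Definition~\ref{def:answerset}, while the absence of a strictly smaller $X$ with $X \models (\Pi^\Phi)^Y$ encodes the minimality condition~(2). Thus when $\calS\calE_\Pi(\Phi) = \semods{\Pi^\Phi}$, the right-hand set is exactly $\answersets{\Pi^\Phi}$, and the two sides match. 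This is the routine but essential bridge between the two definitions.

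The crux is to handle the case distinction built into the SE-function, and this is where I expect the main subtlety to lie. If $\Phi$ is \emph{not} realizable in $\Pi$, then $\calS\calE_\Pi(\Phi) = \emptyset$, so the right-hand set is empty; I must then argue that $\calM$ cannot be a compatible CWV in that case either. The key observation is that if $\calM$ is compatible with $\Phi$ and $\calM = \answersets{\Pi^\Phi}$, then $\calM$ is a nonempty subset of $\mods{\Pi^\Phi} = \{ Y \mid (X,Y) \in \semods{\Pi^\Phi}\}$ witnessing compatibility, which makes $\Phi$ realizable (Definition~\ref{def:realizability}) — a contradiction. Hence non-realizability forces both sides to fail to be a compatible CWV, so the equivalence holds vacuously. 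For the realizable case, the case split collapses to $\calS\calE_\Pi(\Phi) = \semods{\Pi^\Phi}$ and the argument of the previous paragraph applies.

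Finally, I would assemble these pieces into the two directions of the iff. For the forward direction, assume $\calM$ is a CWV w.r.t.\ $\Phi$; compatibility gives realizability, so $\calS\calE_\Pi(\Phi) = \semods{\Pi^\Phi}$, and the answer-set/SE-model correspondence rewrites $\calM = \answersets{\Pi^\Phi}$ into the stated set-builder form. For the backward direction, assume $\calM$ has the stated form and is compatible with $\Phi$; compatibility of the nonempty $\calM$ again yields realizability (so we are in the first branch of the SE-function), and reading the correspondence the other way gives $\calM = \answersets{\Pi^\Phi}$, which together with compatibility is precisely the CWV condition. The only genuine obstacle is bookkeeping the interaction between the realizability branch of $\calS\calE_\Pi$ and the compatibility hypothesis; once that is cleanly separated from the answer-set characterization, the rest is a direct translation between the two definitions.
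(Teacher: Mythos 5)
Your proposal is correct and takes essentially the same route as the paper's own proof: both unfold Definition~\ref{def:candidateworldview} and identify $\answersets{\Pi^\Phi}$ with the set of $Y$ such that $(Y,Y) \in \calS\calE_\Pi(\Phi)$ and no $(X,Y)$ with $X \subset Y$ lies in $\calS\calE_\Pi(\Phi)$, with compatibility carried along verbatim. Your explicit handling of the non-realizable branch (both sides fail, since compatibility forces $\calM \neq \emptyset$ and hence realizability) is a detail the paper leaves implicit, but it does not constitute a different approach.
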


\begin{proof}
  The left-to-right direction can be shown as follows. By definition of the
  SE-function it holds that for each CWV $\calM$ w.r.t.\ a guess $\Phi$ for
  $\Pi$, $\calS\calE_\Pi(\Phi)$ contains the set $\{ (Y, Y) \mid Y \in \calM
  \}$, and, since $\calM$ contains only answer sets of the epistemic reduct
  $\Pi^\Phi$, there cannot be a pair $(X, Y)$ in $\calS\calE_\Pi(\Phi)$ with $X
  \subset Y$. 
  There also cannot be some other pair $(Y', Y')
  \in \calS\calE_\Pi(\Phi)$ for which no pair $(X', Y')$ with $X' \subset Y'$ exists, since
  then this would mean that $Y'$ is an answer set of $\Pi^\Phi$, and therefore
  must be in $\calM$. Finally, since by assumption $\calM$ is a CWV, it is
  compatible with $\Phi$ by definition.

  For the right-to-left direction, note that 
  $\calM$ clearly contains all those
  sets of atoms $Y$ 
  that are models of 
  $\Pi^\Phi$, 
  such that there is no subset of $Y$ model of the GL-reduct $[\Pi^\Phi]^Y$. Hence,
  $\calM$ contains precisely the answer sets of $\Pi^\Phi$. Since, by
  assumption, $\calM$ is compatible with 
  $\Phi$, this immediately implies
  that $\calM$ is a CWV of $\Pi$ as per Definition~\ref{def:candidateworldview}.
\end{proof}

The next statement is a direct consequence of the previous lemma, since, as we
have seen, the SE-function of an ELP defines its CWVs.

\begin{lemma}\label{lem:sefuncequivalence}
  Programs with the same SE-function are CWV-equivalent (and hence
  WV-equivalent).
\end{lemma}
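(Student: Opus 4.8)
The plan is to derive this as an immediate corollary of Lemma~\ref{lem:worldviewfromsefunc}, which already expresses the property of being a CWV purely in terms of the SE-function. First I would invoke Theorem~\ref{thm:atomelitdomains} and the remark following it to assume without loss of generality that the two programs $\Pi_1$ and $\Pi_2$ share the same atom domain $\calA$ and the same epistemic-literal domain $\calE$; this is what makes their SE-functions comparable in the first place, since both then range over the same collection of guesses $\Phi \subseteq \calE$. The hypothesis ``same SE-function'' I read as $\calS\calE_{\Pi_1}(\Phi) = \calS\calE_{\Pi_2}(\Phi)$ for every guess $\Phi \subseteq \calE$.

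Next I would fix an arbitrary guess $\Phi$ and an arbitrary set $\calM$ of interpretations over $\calA$, and apply Lemma~\ref{lem:worldviewfromsefunc} to each program. That lemma states that $\calM$ is a CWV of $\Pi_i$ with respect to $\Phi$ iff $\calM = \{ Y \mid (Y,Y) \in \calS\calE_{\Pi_i}(\Phi),\ \neg\exists X \subset Y\ (X,Y) \in \calS\calE_{\Pi_i}(\Phi) \}$ and $\calM$ is compatible with $\Phi$. The key observation is that the first conjunct refers to the program only through $\calS\calE_{\Pi_i}(\Phi)$, while the compatibility condition depends on $\calM$, $\Phi$, and $\calE$ alone and not on the rules at all. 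Consequently, under the hypothesis $\calS\calE_{\Pi_1}(\Phi) = \calS\calE_{\Pi_2}(\Phi)$, the two characterizing conditions are literally identical, so $\calM$ is a CWV of $\Pi_1$ w.r.t.\ $\Phi$ exactly when it is a CWV of $\Pi_2$ w.r.t.\ $\Phi$.

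Since $\Phi$ and $\calM$ were arbitrary, I would conclude that $\Pi_1$ and $\Pi_2$ have exactly the same CWV/associated-guess pairs; in particular $\cwvs{\Pi_1} = \cwvs{\Pi_2}$, which is precisely CWV-equivalence as in Definition~\ref{def:equivalence}. WV-equivalence then follows because the matching is \emph{guess-wise}: the set of CWV/guess pairs coincides, hence the subset-maximal guesses among them coincide, and so the world views match directly. (This also re-derives the observation, recorded after Definition~\ref{def:equivalence}, that CWV-equivalence implies WV-equivalence.)

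I do not expect a genuine obstacle here, since the real work is already done by Lemma~\ref{lem:worldviewfromsefunc}. The only points requiring care are bookkeeping: aligning the domains via Theorem~\ref{thm:atomelitdomains} so that the two SE-functions can even be compared, and noting explicitly that compatibility is a program-independent condition, so that equality of the SE-functions alone already forces the characterizing conditions to agree.
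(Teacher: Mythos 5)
Your proof is correct and follows essentially the same route as the paper, which also obtains this lemma as a direct consequence of Lemma~\ref{lem:worldviewfromsefunc} (the SE-function determines the CWVs, and compatibility is program-independent). Your additional bookkeeping about aligning domains via Theorem~\ref{thm:atomelitdomains} and the guess-wise matching for WV-equivalence is sound and merely makes explicit what the paper leaves implicit.
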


We are now ready to state the main result of this section, namely that the
SE-function precisely characterizes strong equivalence for ELPs.

\begin{theorem}\label{thm:strongequivalence}
  Let $\Pi_1$ and $\Pi_2$ be two ELPs. The following statements are equivalent:
  \begin{enumerate}
    \item\label{thm:strongequivalence:1} $\Pi_1$ and $\Pi_2$ are
      ELP-CWV-equivalent;
    \item\label{thm:strongequivalence:2} $\Pi_1$ and $\Pi_2$ are
      ASP-CWV-equivalent;
    \item\label{thm:strongequivalence:3} $\Pi_1$ and $\Pi_2$ are
      ELP-WV-equivalent;
    \item\label{thm:strongequivalence:4} $\Pi_1$ and $\Pi_2$ are
      ASP-WV-equivalent; and
    \item\label{thm:strongequivalence:5} $\calS\calE_{\Pi_1} = \calS\calE_{\Pi_2}$.
  \end{enumerate}
\end{theorem}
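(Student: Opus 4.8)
The plan is to establish the chain $(\ref{thm:strongequivalence:5}) \Rightarrow (\ref{thm:strongequivalence:1}) \Rightarrow (\ref{thm:strongequivalence:2}) \Rightarrow (\ref{thm:strongequivalence:4}) \Rightarrow (\ref{thm:strongequivalence:5})$ together with $(\ref{thm:strongequivalence:1}) \Rightarrow (\ref{thm:strongequivalence:3}) \Rightarrow (\ref{thm:strongequivalence:4})$, which makes all five statements mutually reachable and hence equivalent. The implications among $(\ref{thm:strongequivalence:1})$--$(\ref{thm:strongequivalence:4})$ are the easy part: strongly ELP-$x$-equivalence implies strongly ASP-$x$-equivalence because every plain logic program is an ELP (with an empty set of epistemic literals), so the ASP-quantification ranges over a subclass of the contexts used in the ELP-quantification; and strongly $y$-CWV-equivalence implies strongly $y$-WV-equivalence because, for each fixed context, coinciding CWVs force coinciding WVs, the latter being exactly the CWVs with subset-maximal associated guess. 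Using Theorem~\ref{thm:atomelitdomains} I would throughout assume that $\Pi_1$, $\Pi_2$, and any context share a common atom domain $\calA$ and epistemic domain $\calE$; one checks that extending $\calE$ preserves equality of SE-functions, so this alignment is harmless.

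For $(\ref{thm:strongequivalence:5}) \Rightarrow (\ref{thm:strongequivalence:1})$ the key observation is that the SE-function of a union is determined by those of its parts. Since the epistemic reduct distributes over unions, $(\Pi_i \cup \Pi)^\Phi = \Pi_i^\Phi \cup \Pi^\Phi$, and since SE-models (and ordinary models) of an ASP union are the intersections of the respective SE-models (models), one obtains $\semods{(\Pi_i \cup \Pi)^\Phi} = \semods{\Pi_i^\Phi} \cap \semods{\Pi^\Phi}$. I would then note that realizability of $\Phi$ in $\Pi_i \cup \Pi$ implies realizability in $\Pi_i$, since a compatible subset of the intersection is a compatible subset of $\mods{\Pi_i^\Phi}$. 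A short case analysis finishes the step: if $\Phi$ is realizable in $\Pi_1$, then $\calS\calE_{\Pi_1}(\Phi) = \semods{\Pi_1^\Phi}$ is nonempty, so by hypothesis $\calS\calE_{\Pi_2}(\Phi)$ equals it and is nonempty, forcing $\semods{\Pi_1^\Phi} = \semods{\Pi_2^\Phi}$ and hence $\mods{\Pi_1^\Phi} = \mods{\Pi_2^\Phi}$; this makes realizability in $\Pi_1 \cup \Pi$ and $\Pi_2 \cup \Pi$ coincide and equalizes the intersected SE-model sets, so $\calS\calE_{\Pi_1 \cup \Pi}(\Phi) = \calS\calE_{\Pi_2 \cup \Pi}(\Phi)$; if $\Phi$ is not realizable in $\Pi_1$ (equivalently, by hypothesis, not in $\Pi_2$), then it is realizable in neither union and both values are $\emptyset$. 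Thus $\calS\calE_{\Pi_1 \cup \Pi} = \calS\calE_{\Pi_2 \cup \Pi}$, and Lemma~\ref{lem:sefuncequivalence} yields CWV-equivalence of $\Pi_1 \cup \Pi$ and $\Pi_2 \cup \Pi$ for every ELP $\Pi$, which is exactly $(\ref{thm:strongequivalence:1})$.

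The substantial direction is $(\ref{thm:strongequivalence:4}) \Rightarrow (\ref{thm:strongequivalence:5})$, which I would prove by contraposition. Assuming $\calS\calE_{\Pi_1} \neq \calS\calE_{\Pi_2}$, I pick a guess $\Phi$ at which they differ; without loss of generality there is a pair $(X,Y) \in \calS\calE_{\Pi_1}(\Phi) \setminus \calS\calE_{\Pi_2}(\Phi)$, so that $\Phi$ is realizable in $\Pi_1$ and either (a) $\Phi$ is not realizable in $\Pi_2$, or (b) it is realizable but $(X,Y) \notin \semods{\Pi_2^\Phi}$. I would then construct a \emph{plain} logic program $C$, added identically to both ELPs, from two layers. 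The first layer is a set of constraints that, for every $\eneg \ell \in \calE \setminus \Phi$, forces $\ell$ to hold in all models; this makes every guess containing an epistemic literal outside $\Phi$ non-realizable, confining the realizable guesses of $\Pi_i \cup C$ to subsets of $\Phi$. This confinement secures \emph{maximality} essentially for free: if $\Phi$ itself survives as a realizable and compatible guess, its candidate world view is automatically a world view. The second layer is a Turner-style answer-set gadget (using fresh atoms) built from $(X,Y)$, whose purpose is to turn the SE-model discrepancy at $\Phi$ into a discrepancy at the level of candidate world views, while supplying, for each $\eneg \ell \in \Phi$, an answer set witnessing $\ell$ false so that $\Phi$-compatibility of the resulting collection is guaranteed in $\Pi_1 \cup C$.

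I expect the construction of this second layer to be the main obstacle. The difficulty is threefold and must be handled uniformly across cases (a) and (b): the gadget must (i) realize a $\Phi$-compatible candidate world view of $\Pi_1 \cup C$ while preventing any $\Phi$-compatible candidate world view of $\Pi_2 \cup C$ (or forcing a distinct one), using only rules added identically to both programs; (ii) respect answer-set minimality, since the distinguishing collection must consist of genuine answer sets of the reducts, not merely models; and (iii) ensure that no strictly smaller guess $\Psi \subsetneq \Phi$ accidentally yields matching world views in the two programs and thereby restores WV-equivalence. Once $C$ is shown to make $\Phi$ the unique maximal realizable guess and to produce a world view in exactly one of $\Pi_1 \cup C$, $\Pi_2 \cup C$, the two combined programs fail to be WV-equivalent, contradicting $(\ref{thm:strongequivalence:4})$ and closing the cycle.
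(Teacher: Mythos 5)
Your handling of the easy implications and of $(\ref{thm:strongequivalence:5}) \Rightarrow (\ref{thm:strongequivalence:1})$ is correct and essentially identical to the paper's argument (union of reducts, intersection of SE-models, case analysis on realizability). The gap is in $(\ref{thm:strongequivalence:4}) \Rightarrow (\ref{thm:strongequivalence:5})$, and it is twofold. First, the ``second layer'' -- the gadget that turns the SE-model discrepancy $(X,Y)$ into a WV discrepancy -- is exactly where all the technical work of this direction lives, and you do not construct it; you only list its desiderata (i)--(iii) and call it the main obstacle. The paper builds it explicitly: fresh atoms $y_1,\dots,y_n$, a disjunctive rule $y_1 \vee \dots \vee y_n \gets \top$, constraints $\bot \gets y_i, \neg a$ ($a \in Y_i$) and $\bot \gets y_i, a$ ($a \notin Y_i$) that pin every model to some labelled $Y_i$, rules $a \gets y_i$ to promote the compatible set $\calC$ to answer sets, rules $\bot \gets y_i$ to destroy all other models, and then a two-case analysis on whether $Y \models \Pi_2^\Phi$, where the SE-model structure is exploited via the rules $a \gets y_k$ for $a \in X$ and $a \gets b, y_k$ for $a,b \in Y \setminus X$ (this is precisely how minimality, your concern (ii), is handled). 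Without this construction the proof of the hard direction does not exist.

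Second, your ``first layer'' is not harmless but can be fatal. The witness pair $(X,Y) \in \calS\calE_{\Pi_1}(\Phi) \setminus \calS\calE_{\Pi_2}(\Phi)$ gives no guarantee that $Y \models \ell$ for every $\eneg \ell \in \calE \setminus \Phi$: compatibility condition~3 constrains only the chosen compatible subset $\calC$, not every $Y$-component of an SE-model. If $Y$ violates such an $\ell$, your constraints kill $Y$ in \emph{both} reducts. Concretely, take $\calA = \{p,q\}$, $\calE = \{\eneg q\}$, $\calR_1 = \{p \gets \neg q\}$, $\calR_2 = \{p \gets p, \neg q\}$, and $\Phi = \emptyset$: the only witnesses are $(\emptyset,\emptyset)$ and $(\emptyset,\{p\})$, both with $q \notin Y$, and after adding $\bot \gets \neg q$ the two unions have identical SE-functions, hence (by the direction you already proved) they are strongly equivalent, so \emph{no} second layer added identically to both can ever separate them. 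In this example a different guess ($\Phi = \{\eneg q\}$, whose first layer is empty) would rescue you, but your proof picks $\Phi$ and $(X,Y)$ arbitrarily; to make the approach work you would need an additional, non-obvious lemma that a guess with a witness surviving its own first layer always exists whenever the SE-functions differ. The paper avoids this problem entirely: it never constrains the realizable guesses, and instead secures maximality (your point (iii)) by making the model-destruction guess-independent, so that the constructed CWV is the \emph{unique} CWV of the combined program and therefore automatically a WV.
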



\begin{proof}
  \textbf{$(\ref{thm:strongequivalence:1}) \Rightarrow
  (\ref{thm:strongequivalence:2}) \Rightarrow (\ref{thm:strongequivalence:4}),
  (\ref{thm:strongequivalence:1}) \Rightarrow (\ref{thm:strongequivalence:3})
  \Rightarrow (\ref{thm:strongequivalence:4})$.} These follow directly from
  Definition~\ref{def:strongequivalence} and from the fact that every WV is a
  CWV and every ASP program is an ELP.

  \smallskip\noindent\textbf{$(\ref{thm:strongequivalence:5}) \Rightarrow
  (\ref{thm:strongequivalence:1})$.} 
  Assume that statement $(5)$ holds. 
  We need to show that for any third program
  $\Pi$ it holds that $\Pi_1 \cup \Pi$ and $\Pi_2 \cup \Pi$ are equivalent. To
  this end, pick any guess $\Phi$. 
  Then,
  $$
  \calS\calE_{\Pi_1 \cup \Pi}(\Phi) =$$ $$\left\{\begin{array}{@{}lr@{}}
	\calS\calE_{\Pi_1}(\Phi) \cap \calS\calE_\Pi(\Phi) \qquad & \text{if }
	\Phi \text{ is realizable in }\\
	& \calS\calE_{\Pi_1}(\Phi) \cap \calS\calE_\Pi(\Phi)\\[2ex]
	\emptyset & \text{otherwise}
  \end{array}\right\}$$ $$= \calS\calE_{\Pi_2 \cup \Pi}(\Phi),$$ since
  $\calS\calE_{\Pi_1}(\Phi) = \calS\calE_{\Pi_2}(\Phi)$ by assumption, and thus
  $\calS\calE_{\Pi_1}(\Phi) \cap \calS\calE_\Pi(\Phi)  =
  \calS\calE_{\Pi_2}(\Phi) \cap \calS\calE_\Pi(\Phi)$.
  Lemma~\ref{lem:sefuncequivalence} then proves that $\Pi_1 \cup \Pi$ and $\Pi_2
  \cup \Pi$ are equivalent.

  \smallskip\noindent\textbf{$(\ref{thm:strongequivalence:4}) \Rightarrow
  (\ref{thm:strongequivalence:5})$.} We will prove the contrapositive. Let
  $\Pi_1 = (\calA_1, \calE_1, \calR_1)$ and $\Pi_2 = (\calA_2, \calE_2,
  \calR_2)$ be two ELPs and assume, w.l.o.g., that $\calA_1 = \calA_2 = \calA$
  and $\calE_1 = \calE_2 = \calE$ (cf.\ Theorem~\ref{thm:atomelitdomains}).
  Further, let $\Phi \subseteq \calE$ be a guess such that $\calY_1 =
  \calS\calE_{\Pi_1}(\Phi) \neq \calS\calE_{\Pi_2}(\Phi) = \calY_2$. Finally,
  assume that there is a pair $(X, Y)$ in $\calY_1$ but not in $\calY_2$ (again,
  w.l.o.g., by symmetry). We need to show that there exists a logic program
  $\Pi$ (i.e.\ without epistemic literals) such that the WVs of the ELP $\Pi_1
  \cup \Pi$ differ from those of $\Pi_2 \cup \Pi$. 
  We only need to consider
  the case where $\Pi_1$ and $\Pi_2$ are WV-equivalent, since the claim is
  trivially true otherwise. 

  By Definition~\ref{def:sefunction}, with $\calY_1$ non-empty by assumption,
  there is a subset $\calC \subseteq \{ Y \mid (X, Y) \in \calY_1 \}$ compatible
  with $\Phi$, since $\Phi$ is realizable in $\Pi$. Let $\calC = \{ Y_1,
  \ldots, Y_m \}$ and let $\{ Y_{m+1}, \ldots, Y_n \} = 2^\calA \setminus
  \calC$. 

  The idea is to construct $\Pi$ in such a way that the potential WV represented
  by $\calC$ is actually realized in $\Pi_1 \cup \Pi$. To construct $\Pi$, let
  $y_1, \ldots, y_n$ be fresh atoms not occurring in $\calA$. Let $\Pi$ contain
  the rule $y_1 \vee \ldots \vee y_n \gets \top,$ and, furthermore, for all $1
  \leq i \leq n$ and $a \in Y_i$, the rule $\bot \gets y_i, \neg a$, and for all
  $a \in \calA \setminus Y_i$, the rule $\bot \gets y_i, a$. 
  This makes sure that for every model $Y_i$ of $\Pi_1^\Phi$, the corresponding
  model $Y_i'$ of $(\Pi_1 \cup \Pi)^\Phi$ contains the atom $y_i$ (i.e.\ $Y_i' =
  Y_i \cup \{ y_i \}$).

  Now, take the pair $(X, Y)$ that is, by assumption, contained in $\calY_1$
  (but not in $\calY_2$). Clearly, there is some integer $k$, $1 \leq k \leq n$,
  such that $Y = Y_k$. Now, for each model $Y_i$, $1 \leq i \leq m$, $i \neq k$,
  and each atom $a \in Y_i$, add the rule $a \gets y_i$ to $\Pi$. For each model
  $Y_i$, $m < i \leq n$, $i \neq k$, add the rule $\bot \gets y_i$ to $\Pi$.
  This makes sure that, in $(\Pi_1 \cup \Pi)^\Phi$, exactly the models from
  $\calC$ (except $Y_k$, if $k \leq m$) become answer sets, and all other models
  are destroyed, except for the model $Y = Y_k$, if
  $k > m$.
  
  At this point, if we have that $\{ Y_1, \ldots, Y_m \} \not\subseteq
  \mods{\Pi_2^\Phi}$, we simply do the same as above also for the model $Y =
  Y_k$ (i.e.\ realize $Y_k$ as an answer set if $k \leq m$, or destroy $Y_k$ in
  case $k > m$). Then, clearly, $(\Pi_1 \cup \Pi)^\Phi$ will have the answer
  sets $\{ Y_1 \cup \{ y_1 \}, \ldots, Y_m \cup \{ y_m \} \}$ which form a CWV
  of $\Pi_1 \cup \Pi$, but not of $\Pi_2 \cup \Pi$. Since all other models are
  destroyed, independent of the guess $\Phi$, this CWV is actually the only CWV
  of $\Pi_1 \cup \Pi$, and hence, it is a WV, proving our claim that $\Pi_1 \cup
  \Pi$ and $\Pi_2 \cup \Pi$ are not WV-equivalent \textbf{($\star$)}. It therefore remains to show
  the claim for the case where the set $\{ Y_1, \ldots, Y_m \} \subseteq
  \mods{\Pi_2^\Phi}$. To this end, we need to distinguish the following two
  cases:

  \smallskip \noindent \textit{Case (1).} Assume that $Y \not\models
  \Pi_2^\Phi$. In this case, for each atom $a \in Y$, add the rule $a \gets y_k$
  to $\Pi$. 
  Now, $Y \cup \{ y_k \}$
  is an answer set of $(\Pi_1 \cup \Pi)^\Phi$, but not of $(\Pi_2 \cup
  \Pi)^\Phi = \Pi_2^\Phi \cup \Pi$. 

  \smallskip \noindent \textit{Case (2).} Assume that $Y \models \Pi_2^\Phi$. In
  this case, for each atom $a \in X$, add the rule $a \gets y_k$ to $\Pi$, and,
  in addition, for all atoms $a, b \in Y \setminus X$, add the rule $a \gets b,
  y_k$ to $\Pi$. We will show that, in this case, $Y$ is an answer set of
  $(\Pi_2 \cup \Pi)^\Phi$, but not of $(\Pi_1 \cup \Pi)^\Phi$. Since $Y \models
  \Pi_2^\Phi$ and every model of a program is also a model of its GL-reduct, by
  definition of SE-models we know that $Y \neq X$, since, by assumption, $(X, Y)
  \in \calY_1$ but $(X, Y) \not\in \calY_2$. Since $(X, Y) \in \calY_1$, we have
  that $X \models [\Pi_1^\Phi]^Y$. But then, by construction of $\Pi$ it holds
  that $X \cup \{ y_k \} \models [(\Pi_1 \cup \Pi)^\Phi]^Y = [\Pi_1^\Phi]^Y \cup
  [\Pi^\Phi]^Y$ and therefore $Y \cup \{ y_k \}$ is not an answer set of $(\Pi_1
  \cup \Pi)^\Phi$. On the other hand, for $(\Pi_2 \cup \Pi)^\Phi$, assume that
  there is some $X' \subset Y$ such that $X' \cup \{ y_k \} \models [(\Pi_2 \cup
  \Pi)^\Phi]^Y$. Clearly, by construction of $\Pi$, $X \subset X'$. Thus, there
  is some atom $a$ in $X' \subseteq Y$ but not in $X$. But, by construction of
  $\Pi$, we then have that $X' = Y$. Hence, $Y \cup \{ y_k \}$ is an answer set
  of $(\Pi_2 \cup \Pi)^\Phi$, as desired.

  \smallskip The above shows that, for both cases (1) and (2), the set $\calC' =
  \{ Y_1 \cup \{ y_1 \}, \ldots, Y_m \cup \{ y_m \} \}$ is the set of answer
  sets of one of the two programs $(\Pi_1 \cup \Pi)^\Phi$ and $(\Pi_2 \cup
  \Pi)^\Phi$, and therefore, by assumption, a CWV of that program.  But, by
  construction of $\Pi$, $\calC'$ cannot be a CWV for the other program (because
  $Y \cup \{ y_k \}$ is an answer set of one of the two programs, but not both,
  and hence distinguishes the CWVs)\footnote{Note that, in particular, in case
  (1), $\calC'$ is a CWV of $\Pi_1 \cup \Pi$ but not of $\Pi_2 \cup \Pi$, and in
  case (2), the same holds if $k \leq m$, and the reverse holds if $k > m$.}. It
  remains to show that $\calC'$ is not only a CWV but also a WV of exactly one
  of the two programs, which can be done via the argument for
  \textbf{($\star$)}.  This concludes the proof.
\end{proof}

The above theorem states that the SE-function precisely characterizes all the
notions of strong equivalence for ELPs and that these notions are all
equivalent. We therefore will, from now on, jointly refer to these four
equivalent notions as \emph{strong equivalence}. The
next statement follows immediately.

\begin{corollary}\label{corr:strongequivalence}
  ELPs are strongly equivalent iff they have the same SE-function.
\end{corollary}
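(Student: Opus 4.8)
The plan is to observe that this corollary is an immediate consequence of Theorem~\ref{thm:strongequivalence}. Recall that, in the discussion following that theorem, the term \emph{strong equivalence} was introduced as a joint name for the four notions of Definition~\ref{def:strongequivalence}: being strongly ELP-WV-equivalent, strongly ASP-WV-equivalent, strongly ELP-CWV-equivalent, and strongly ASP-CWV-equivalent. These four notions are exactly statements~(\ref{thm:strongequivalence:1})--(\ref{thm:strongequivalence:4}) of the theorem, which establishes that all five of its statements are pairwise equivalent; in particular, each of (\ref{thm:strongequivalence:1})--(\ref{thm:strongequivalence:4}) is equivalent to statement~(\ref{thm:strongequivalence:5}), namely $\calS\calE_{\Pi_1} = \calS\calE_{\Pi_2}$.

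First I would note that, since ``strongly equivalent'' by definition denotes (equivalently) any one of statements~(\ref{thm:strongequivalence:1})--(\ref{thm:strongequivalence:4}), and each of these is equivalent to statement~(\ref{thm:strongequivalence:5}) by the theorem, it follows at once that two ELPs $\Pi_1$ and $\Pi_2$ are strongly equivalent if and only if they have the same SE-function. There is no real obstacle here: the corollary merely isolates the characterization via SE-functions as the criterion of practical interest, and its proof amounts to a one-line appeal to Theorem~\ref{thm:strongequivalence}. If anything, the only point worth making explicit is that this is well-defined precisely because the theorem guarantees that the choice among the four notions of Definition~\ref{def:strongequivalence} is immaterial.
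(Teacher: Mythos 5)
Your proposal is correct and matches the paper's own treatment: the corollary is stated right after the theorem with the remark that it ``follows immediately,'' exactly because all four notions of Definition~\ref{def:strongequivalence} were shown equivalent to statement~(\ref{thm:strongequivalence:5}) and are jointly named \emph{strong equivalence}. Nothing further is needed.
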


\begin{example}\label{ex:cwa-running4}
  Continuing from Example~\ref{ex:cwa-running3}, we observe that the difference in
  the SE-function can be made manifest by combining the two programs with the
  program $\Pi = (\calA_C, \calE_C, \calR)$, where $\calR = \{ p \gets \neg p'\}$. We observe that $\cwvs{\Pi_G \cup \Pi} = \{\{\{p\}\}\}$ (due to guess $\{ \eneg p \}$) and $\cwvs{\Pi_S \cup \Pi} = \{\{\{p'\}\}\}$ (due to guess $\{ \eneg \neg p \}$).
\end{example}

We close this section by 
showing that our definition of strong equivalence for
ELPs generalizes the established notion of strong equivalence for
logic programs.

\begin{corollary}\label{corr:segeneralization}
  Let $\Pi_1 = (\calA, \calR_1)$ and $\Pi_2 = (\calA, \calR_2)$ be two logic
  programs, and let $\Pi_1' = (\calA, \emptyset, \calR_1)$ and $\Pi'_2 = (\calA,
  \emptyset, \calR_2)$ be two ELPs with the same sets of rules as $\Pi_1$ and
  $\Pi_2$, respectively. 
   Then, $\Pi_1$ and
  $\Pi_2$ are strongly equivalent iff $\Pi_1'$ and $\Pi_2'$ are.
\end{corollary}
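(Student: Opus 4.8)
The plan is to reduce the statement to the two characterizations we already have at hand: for plain logic programs, strong equivalence coincides with equality of SE-models (Turner's theorem, recalled in the preliminaries), and for ELPs it coincides with equality of SE-functions (Corollary~\ref{corr:strongequivalence}). It therefore suffices to show that, when the set of epistemic literals is empty, the SE-function of $\Pi_i'$ carries exactly the information contained in the SE-models of $\Pi_i$; concretely, I will establish that $\calS\calE_{\Pi_1'} = \calS\calE_{\Pi_2'}$ holds iff $\semods{\Pi_1} = \semods{\Pi_2}$, and then chain the known equivalences.

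First I would note that, since $\calE = \emptyset$, the empty guess $\Phi = \emptyset$ is the \emph{only} guess, so each SE-function is fully determined by its single value $\calS\calE_{\Pi_i'}(\emptyset)$. Next I would unfold this value via Definition~\ref{def:sefunction}. Because the rules in $\calR_i$ contain no epistemic literals, forming the epistemic reduct (replacing epistemic literals in $\Phi$ by $\top$ and remaining $\eneg$ by $\neg$) leaves the rules untouched, so ${\Pi_i'}^\emptyset$ is literally the logic program $\Pi_i$, and hence $\semods{{\Pi_i'}^\emptyset} = \semods{\Pi_i}$.

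Then I would deal with the realizability side condition, which is the only genuinely delicate point. By Definition~\ref{def:realizability} together with Definition~\ref{def:compatibility}, a set $\calI'$ is compatible with $\emptyset$ w.r.t.\ $\calE = \emptyset$ exactly when $\calI' \neq \emptyset$, as conditions~(\ref{def:compatibility:2}) and~(\ref{def:compatibility:3}) are vacuous. Thus $\emptyset$ is realizable in $\Pi_i'$ iff $\mods{{\Pi_i'}^\emptyset} = \mods{\Pi_i}$ admits a nonempty subset, i.e.\ iff $\Pi_i$ has at least one model. The care needed is the degenerate case: if $\Pi_i$ has no model, then $\emptyset$ is not realizable and $\calS\calE_{\Pi_i'}(\emptyset) = \emptyset$; but in that same case $\semods{\Pi_i} = \emptyset$ as well, since any SE-model $(X,Y)$ requires $Y \models \Pi_i$. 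Hence in every case $\calS\calE_{\Pi_i'}(\emptyset) = \semods{\Pi_i}$, which is the crux of the argument and precisely the place where the realizability clause could a priori have spoiled the correspondence.

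Finally I would chain the equivalences: $\Pi_1$ and $\Pi_2$ are strongly equivalent as logic programs iff $\semods{\Pi_1} = \semods{\Pi_2}$ (Turner), iff $\calS\calE_{\Pi_1'}(\emptyset) = \calS\calE_{\Pi_2'}(\emptyset)$ (previous step), iff $\calS\calE_{\Pi_1'} = \calS\calE_{\Pi_2'}$ (as $\emptyset$ is the only guess), iff $\Pi_1'$ and $\Pi_2'$ are strongly equivalent as ELPs (Corollary~\ref{corr:strongequivalence}). The main obstacle, modest as it is, is verifying the no-model degenerate case so that the realizability condition introduces no discrepancy; everything else is routine unfolding of the relevant definitions.
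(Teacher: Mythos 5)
Your proof is correct and follows essentially the same route as the paper's own (sketched) proof: both reduce to the single guess $\Phi=\emptyset$, establish $\calS\calE_{\Pi_i'}(\emptyset)=\semods{\Pi_i}$ by observing that compatibility with the empty guess w.r.t.\ an empty $\calE$ amounts to non-emptiness, and then chain Turner's characterization with Corollary~\ref{corr:strongequivalence}. Your explicit treatment of the inconsistent-program case is exactly the ``crucial observation'' the paper compresses into one sentence.
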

\begin{proof}(Sketch)
For $\Pi'_1$ and $\Pi'_2$
there is only one possible guess, 
namely, $\Phi = \emptyset$. 
Moreover, in this setting it holds that 
 $\calS\calE_{\Pi}(\emptyset) =
  \semods{\Pi}$ for any ELP $\Pi$: 
  the crucical observation is that $\semods{\Pi}$
  is compatible with 
  $\Phi$ w.r.t.\ 
  the empty domain of epistemic literals, exactly
  if $\semods{\Pi}\neq\emptyset$.
\end{proof}

\section{Complexity of ELP Strong Equivalence}\label{sec:complexity}

We now make use of our characterization to settle the computational complexity
of deciding strong equivalence between ELPs.
The following lemma shows that to check the realizability
of a guess $\Phi$ for a given ELP $\Pi$ it suffices to consider a
polynomially-sized subset of the models of $\Pi$.

\begin{lemma}\label{lem:polynomialwitness}
  Let $\Pi = (\calA, \calE, \calR)$ be an ELP, $\Phi \subseteq \calE$ a guess
  for $\Pi$, and $\calY = \calS\calE_\Pi(\Phi)$ be non-empty. Then, there is a
  set 
  $\calC \subseteq \{ Y \mid (X, Y) \in \calY \}$
  of polynomial size in $\Pi$ 
  that is compatible with $\Phi$ w.r.t. $\calE$.
\end{lemma}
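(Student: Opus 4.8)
The plan is to prove this by a direct bounding argument on the size of a compatibility witness. Recall that $\calY = \semods{\Pi^\Phi}$ is non-empty, so $\Phi$ is realizable, meaning there exists \emph{some} subset $\calC_0 \subseteq \{Y \mid (X,Y)\in\calY\} = \mods{\Pi^\Phi}$ that is compatible with $\Phi$ w.r.t.\ $\calE$. The goal is to extract from $\calC_0$ a small subset that remains compatible. Compatibility (Definition~\ref{def:compatibility}) imposes three requirements: nonemptiness, an existential condition for each $\eneg\ell\in\Phi$ (some model falsifies $\ell$), and a universal condition for each $\eneg\ell\in\calE\setminus\Phi$ (every model satisfies $\ell$). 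The universal conditions are \emph{monotone downward}: if they hold for $\calC_0$, they hold for every subset of $\calC_0$. So the only obstruction to shrinking $\calC_0$ is keeping the existential conditions satisfied.

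The key observation is therefore that each existential condition can be met by a single model. First I would, for each epistemic literal $\eneg\ell\in\Phi$, use compatibility of $\calC_0$ to pick one interpretation $I_\ell\in\calC_0$ with $I_\ell\not\models\ell$ (such an $I_\ell$ exists by condition~\ref{def:compatibility:2}). Then I define
$$\calC = \{\, I_\ell \mid \eneg\ell\in\Phi \,\}.$$
I claim $\calC$ is the desired witness. It is a subset of $\calC_0$, hence a subset of $\mods{\Pi^\Phi} = \{Y\mid (X,Y)\in\calY\}$, and its cardinality is at most $|\Phi|\le|\calE|$, which is polynomial in the size of $\Pi$ since $\calE$ is part of the input. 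If $\Phi=\emptyset$ then $\calC$ would be empty, so in that degenerate case I instead take $\calC$ to be any single element of $\calC_0$ (which is non-empty because $\calY$ is non-empty), ensuring condition~\ref{def:compatibility:1}.

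It then remains to verify that $\calC$ is compatible with $\Phi$. Nonemptiness holds by the construction above. For condition~\ref{def:compatibility:2}, each $\eneg\ell\in\Phi$ has its witness $I_\ell\in\calC$ with $I_\ell\not\models\ell$ by choice. For condition~\ref{def:compatibility:3}, every $\eneg\ell\in\calE\setminus\Phi$ satisfies $I\models\ell$ for all $I\in\calC_0$, and since $\calC\subseteq\calC_0$, this persists for all $I\in\calC$; this is exactly the downward monotonicity of the universal conditions noted above. Thus $\calC$ is compatible with $\Phi$ w.r.t.\ $\calE$ and has polynomial size, establishing the lemma.

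I do not anticipate a serious obstacle here: the proof is essentially a ``one witness per existential requirement'' selection argument, and the main point to handle carefully is the $\Phi=\emptyset$ boundary case, where the per-literal selection yields nothing and one must separately ensure $\calC\neq\emptyset$ by invoking the nonemptiness of $\calY$. The only other subtlety worth stating explicitly is the identification $\{Y\mid (X,Y)\in\calY\}=\mods{\Pi^\Phi}$ (noted in the text preceding the lemma), which guarantees that the subset $\calC$ extracted from $\calC_0$ indeed lies in the stated set.
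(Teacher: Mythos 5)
Your proof is correct and follows essentially the same route as the paper's: select one witnessing model per epistemic literal in $\Phi$ from a compatible set guaranteed by realizability, handle the $\Phi=\emptyset$ case with a singleton, and observe that the universal conditions of compatibility are preserved under taking subsets. The only difference is that you spell out the downward-monotonicity of condition~\ref{def:compatibility:3} explicitly, which the paper leaves as "clearly."
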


\begin{proof}
  According to Definition~\ref{def:sefunction}, since $\Phi$ is realizable in
  $\Pi$, there is a subset $\calY' \subseteq \calY$, such that the set $\calC'
  = \{Y \mid (X, Y) \in \calY' \}$ is compatible with $\Phi$.  
  By
  Definition~\ref{def:compatibility}, for each 
  $\eneg \ell
  \in \calE$, when $\Phi$ contains $\eneg \ell$, there is a $Y_{\eneg \ell}
  \in\calC'$ such that $\ell$ is false in $Y_{\eneg \ell}$. Take the subset
  $\calC \subseteq \calC'$ containing $Y_{\eneg \ell}$ for each
  $\eneg \ell \in \calE$, or, if $\Phi = \emptyset$, let
  $\calC$ be any singleton subset of $\calC'$. Note that $\calC$ is of
  polynomial size in $\Pi$. Clearly, $\calC$ is also compatible with $\Phi$.
  %
\end{proof}

With this crucial observation in place, we are now ready to state the complexity
of checking strong equivalence for ELPs,  
which remains
in \co\NP\ as for plain logic programs. 
This is surprising, since 
reasoning tasks for ELPs generally are one level higher on the polynomial
hierarchy than the corresponding 
tasks for logic programs (cf.\
Section~\ref{sec:preliminaries}).

\begin{theorem}\label{thm:secomplexity}
  Checking whether two ELPs are strongly equivalent is \co\NP-complete.
\end{theorem}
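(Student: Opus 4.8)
The plan is to prove membership in \co\NP\ and \co\NP-hardness separately. By Corollary~\ref{corr:strongequivalence}, strong equivalence amounts to checking $\calS\calE_{\Pi_1} = \calS\calE_{\Pi_2}$, and by Theorem~\ref{thm:atomelitdomains} I may assume $\Pi_1 = (\calA, \calE, \calR_1)$ and $\Pi_2 = (\calA, \calE, \calR_2)$ share the same domains. For membership I would show that the \emph{complementary} problem---deciding that the two ELPs are not strongly equivalent---lies in \NP. The two SE-functions differ exactly when some guess $\Phi \subseteq \calE$ satisfies $\calS\calE_{\Pi_1}(\Phi) \neq \calS\calE_{\Pi_2}(\Phi)$, so the certificate consists of such a guess $\Phi$ together with a single distinguishing SE-model pair $(X, Y)$.

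The central claim I would establish is that, whenever the two SE-functions differ on some $\Phi$, one can always choose the pair $(X,Y)$ and an index $i \in \{1,2\}$ so that simultaneously (i)~$\Phi$ is realizable in $\Pi_i$, (ii)~$(X,Y) \in \semods{\Pi_i^\Phi}$, and (iii)~$(X,Y) \notin \semods{\Pi_j^\Phi}$ for the other index $j$. Each condition is polynomial-time checkable once suitable witnesses are guessed: realizability is certified by a polynomially-sized compatible subset of $\mods{\Pi_i^\Phi}$ via Lemma~\ref{lem:polynomialwitness}, whereas SE-model membership and non-membership are decided directly from the definition of SE-models applied to the epistemic reducts (which are computable in polynomial time by syntactic replacement). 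The key point is that condition (iii) forces $(X,Y) \notin \calS\calE_{\Pi_j}(\Phi)$ \emph{irrespective} of whether $\Phi$ is realizable in $\Pi_j$; this is what keeps the entire test existential and avoids having to certify \emph{non}-realizability, which would be a \co\NP\ subtask.

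The crux---and the step I expect to be the main obstacle---is verifying that such a witness always exists. Since $\calS\calE_\Pi(\Phi) = \emptyset$ unless $\Phi$ is realizable, a difference forces $\semods{\Pi_1^\Phi} \neq \semods{\Pi_2^\Phi}$ with at least one program realizable. If both are realizable, then $\calS\calE_{\Pi_i}(\Phi) = \semods{\Pi_i^\Phi}$ for both $i$, and any pair lying in exactly one of the two SE-model sets satisfies (i)--(iii). The delicate case is when $\Phi$ is realizable in, say, $\Pi_1$ but not in $\Pi_2$. Here I would argue $\mods{\Pi_1^\Phi} \not\subseteq \mods{\Pi_2^\Phi}$: otherwise a compatible subset of $\mods{\Pi_1^\Phi}$ witnessing realizability of $\Pi_1$ would also be a subset of $\mods{\Pi_2^\Phi}$, and hence, being compatible with $\Phi$ w.r.t.\ $\calE$, would witness realizability of $\Pi_2$---a contradiction. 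Therefore there is a model $Y \in \mods{\Pi_1^\Phi} \setminus \mods{\Pi_2^\Phi}$, and the pair $(Y,Y)$ satisfies (i)--(iii), since $Y \models \Pi_1^\Phi$ makes $(Y,Y)$ an SE-model of $\Pi_1^\Phi$ while $Y \nmodels \Pi_2^\Phi$ excludes $(Y,Y)$ from $\semods{\Pi_2^\Phi}$. This establishes that non-strong-equivalence is in \NP, hence strong equivalence is in \co\NP.

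For \co\NP-hardness I would reduce from strong equivalence of plain logic programs, which is \co\NP-complete as recalled in the Preliminaries. Given logic programs $\Pi_1, \Pi_2$ over $\calA$, construct in polynomial time the ELPs $\Pi_1' = (\calA, \emptyset, \calR_1)$ and $\Pi_2' = (\calA, \emptyset, \calR_2)$. By Corollary~\ref{corr:segeneralization}, $\Pi_1$ and $\Pi_2$ are strongly equivalent iff $\Pi_1'$ and $\Pi_2'$ are, so the reduction is correct and the lower bound transfers, yielding \co\NP-completeness.
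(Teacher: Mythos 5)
Your proof is correct, and its skeleton is the same as the paper's: membership is shown by an \NP\ procedure for the complement that guesses an epistemic guess $\Phi$, a polynomially sized $\Phi$-compatible set of models (justified by Lemma~\ref{lem:polynomialwitness}), and a single distinguishing pair $(X,Y)$; hardness is inherited from plain ASP via Corollary~\ref{corr:segeneralization}, exactly as in the paper. The genuine difference lies in the certificate design, and it works in your favor. The paper's procedure demands that the guessed set $\calC$ consist of models of \emph{both} reducts $\Pi_1^\Phi$ and $\Pi_2^\Phi$, i.e., it certifies realizability of $\Phi$ in both programs at once; its completeness argument then rests on the loosely justified claim that differing SE-functions always produce a guess at which both SE-function values are non-empty, and, read literally, that procedure can never accept at a guess that is realizable in one program but not in the other (no common compatible $\calC$ exists there). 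Your certificate instead certifies realizability only for the one program $\Pi_i$ claimed to contain the distinguishing pair, and exploits the fact that $(X,Y)\notin\semods{\Pi_j^\Phi}$ already forces $(X,Y)\notin\calS\calE_{\Pi_j}(\Phi)$ irrespective of whether $\Phi$ is realizable in $\Pi_j$. Combined with your observation that realizability can differ between the two programs only if $\mods{\Pi_1^\Phi}\not\subseteq\mods{\Pi_2^\Phi}$ (which furnishes a diagonal witness $(Y,Y)$), this makes completeness airtight precisely in the asymmetric case that the paper's write-up glosses over, and it also sidesteps the related subtlety that separate realizability in both programs need not be witnessed by a \emph{common} compatible subset of models. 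In short: same route and same two external ingredients, but your one-sided certificate yields a cleaner and more rigorous completeness argument than the paper's own.
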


\begin{proof}
  %
  We give a non-deterministic polynomial time procedure for checking that two
  ELPs are \emph{not} strongly equivalent, that is, that there is a difference
  in their respective SE-functions. W.l.o.g. let $\Pi_1 = (\calA, \calE, \calR_1)$
  and $\Pi_2 = (\calA, \calE, \calR_2)$. 
  The procedure works as follows:
  \begin{enumerate}
    \item Guess an epistemic guess $\Phi \subseteq \calE$.
    \item\label{proof:secomplexity:step2} Guess a polynomially-sized, non-empty
      set of interpretations $\calC$ over $\calA$, compatible with $\Phi$.
    \item Verify in polynomial time that each $Y \in \calC$ is a model of both
      $\Pi_1^\Phi$ and $\Pi_2^\Phi$. If not, REJECT.
    \item Guess a 
          pair $(X, Y)$ with $X\subseteq Y\subseteq \calA$.
    \item Verify in polynomial time that $(X, Y) \in \calS\calE_{\Pi_1}(\Phi)$
      but $(X, Y) \not\in \calS\calE_{\Pi_2}(\Phi)$ or vice versa. If so,
      ACCEPT. If not, REJECT.
  \end{enumerate}

  Clearly, the above procedure terminates in polynomial time, since it is known
  that model checking for SE-models can be done in polynomial time \cite{amai:EiterFFW07}.
  To obtain correctness, it is not difficult to verify that the above procedure
  is sound and complete given the following two observations. Firstly, note that
  Lemma~\ref{lem:polynomialwitness} implies that we only need to focus on
  polynomially sized subsets when evaluating the realizability of a guess as
  stated in Definition~\ref{def:sefunction}; hence guessing a polynomially sized
  set of interpretations is enough in step~\ref{proof:secomplexity:step2}.
  Secondly, assume that two programs $\Pi_1$ and $\Pi_2$ have differing
  SE-functions. Then this means that there must be some guess $\Phi$, such that
  both sets $\calS\calE_{\Pi_1}(\Phi)$ and $\calS\calE_{\Pi_2}(\Phi)$ are
  non-empty (clearly, as otherwise $\calS\calE_{\Pi_1}(\Phi) =
  \calS\calE_{\Pi_2}(\Phi) = \emptyset$ for all guesses $\Phi$). But, by
  Definition~\ref{def:sefunction}, this means that both sets contain a potential
  CWV. Now there are two cases: either they do not share any potential CWVs, or
  if they do, then there is at least one SE-model $(X, Y)$ contained in one but
  not both of the two sets.

  Corollary~\ref{corr:segeneralization} allows us to inherit the \co\NP\ lower
  bound from the case of logic programs \cite{tplp:PearceTW09}, which completes the proof.
\end{proof}

\section{Case Studies}\label{sec:casestudies}

In this section, we apply our characterisation 
to investigate basic principles for simplifying ELPs.


\subsection{Tautological Rules}\label{sec:tautologicalrules}

Tautological rules are rules that can simply be removed from any program 
without affecting its outcome.


\begin{definition}\label{def:tautologicalrule}
  An ELP rule $r$ is \emph{tautological} iff the single-rule ELP $\Pi_r =
  (\calA, \calE, \{ r \})$ is strongly equivalent to the empty program
  $\Pi_\emptyset = (\calA, \calE, \emptyset)$.
\end{definition}

Let $\calE$ be a set of epistemic literals over $\calA$. 
We say that an
epistemic guess $\Phi \subseteq \calE$ is \emph{consistent} iff, whenever
$\calE$ contains both $\eneg a$ and $\eneg \neg a$ for some atom $a \in \calA$,
it holds that $\Phi$ contains at least one of $\{ \eneg a, \eneg \neg a \}$.\footnote{%
Clearly, if $\Phi$ contains neither of the two epistemic literals, then $\Phi$
can never be realizable in any ELP, since there is no (non-empty) set of models
where $a$ is both always true and always false.} 
Moreover, let $\calS_\calA$ denote the set of all pairs $(X, Y)$ such
that $X\subseteq Y\subseteq \calA$.

\begin{lemma}\label{lem:tautologicalrulesefunc}
  An ELP rule $r$ is tautological iff for the single-rule ELP $\Pi_r = (\calA,
  \calE, \{ r \})$ it holds that $\calS\calE_{\Pi_r}(\Phi) = \calS_\calA$ for
  each consistent guess $\Phi \subseteq \calE$.
\end{lemma}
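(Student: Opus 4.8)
The plan is to reduce the statement to a single computation of the SE-function of the empty program and then appeal to Corollary~\ref{corr:strongequivalence}. By Definition~\ref{def:tautologicalrule}, $r$ is tautological iff $\Pi_r$ is strongly equivalent to $\Pi_\emptyset = (\calA, \calE, \emptyset)$, and by Corollary~\ref{corr:strongequivalence} this holds iff $\calS\calE_{\Pi_r} = \calS\calE_{\Pi_\emptyset}$, that is, iff $\calS\calE_{\Pi_r}(\Phi) = \calS\calE_{\Pi_\emptyset}(\Phi)$ for \emph{every} guess $\Phi \subseteq \calE$. So the whole task is to determine $\calS\calE_{\Pi_\emptyset}(\Phi)$ and to argue that on the guesses the lemma does not mention (the inconsistent ones) both sides agree automatically.

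First I would dispose of the inconsistent guesses. If $\Phi$ is inconsistent, then $\calE$ contains both $\eneg a$ and $\eneg \neg a$ for some $a$ with neither literal in $\Phi$; by the footnoted observation, no non-empty set of interpretations can keep $a$ true everywhere (forced by $\eneg a \in \calE \setminus \Phi$) and false everywhere (forced by $\eneg \neg a \in \calE \setminus \Phi$) at once, so such a $\Phi$ is realizable in \emph{no} ELP. Hence $\calS\calE_{\Pi_r}(\Phi) = \emptyset = \calS\calE_{\Pi_\emptyset}(\Phi)$ for every inconsistent $\Phi$, independently of $r$. Consequently $\calS\calE_{\Pi_r} = \calS\calE_{\Pi_\emptyset}$ holds iff the two SE-functions agree on all \emph{consistent} guesses.

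Next I would compute $\calS\calE_{\Pi_\emptyset}(\Phi)$ for a consistent $\Phi$. The epistemic reduct $\Pi_\emptyset^\Phi$ is the rule-free program $(\calA, \emptyset)$; every interpretation is a model of it and of its GL-reduct, so by definition $\semods{\Pi_\emptyset^\Phi}$ is exactly $\calS_\calA$, the set of all pairs $(X, Y)$ with $X \subseteq Y \subseteq \calA$. It then remains to show that every consistent $\Phi$ is realizable in $\Pi_\emptyset$, which by Definition~\ref{def:sefunction} yields $\calS\calE_{\Pi_\emptyset}(\Phi) = \calS_\calA$. This realizability step is the only part requiring genuine work: over the empty program all of $2^\calA$ is available, and I would build a compatible witness set by taking, for each $\eneg \ell \in \Phi$, one interpretation that falsifies $\ell$ while still satisfying $\ell'$ for every $\eneg \ell' \in \calE \setminus \Phi$ (and, if $\Phi = \emptyset$, any single interpretation respecting all these ``always-true'' constraints). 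Consistency of $\Phi$ is precisely what guarantees that the always-true constraints never force some atom to be simultaneously true and false, so such interpretations always exist; the collected set then meets all three conditions of Definition~\ref{def:compatibility}.

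Combining the three observations finishes the argument: $r$ is tautological iff $\calS\calE_{\Pi_r}$ and $\calS\calE_{\Pi_\emptyset}$ agree everywhere, iff (by the inconsistent-guess reduction) they agree on all consistent guesses, iff $\calS\calE_{\Pi_r}(\Phi) = \calS_\calA$ for every consistent $\Phi$, as claimed. I expect the realizability construction in the third paragraph to be the main obstacle, although it becomes routine once consistency is identified as exactly the condition governing realizability of the empty program.
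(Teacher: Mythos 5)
Your proof is correct and takes essentially the same route as the paper's (much terser) proof: both reduce tautology to equality of SE-functions via Corollary~\ref{corr:strongequivalence}, establish $\calS\calE_{\Pi_\emptyset}(\Phi) = \calS_\calA$ for every consistent guess $\Phi$ by showing such guesses are realizable in the empty program, and dismiss inconsistent guesses as realizable in no ELP. The paper merely compresses your second and third paragraphs into the single observation that $\semods{\Pi^\Phi} = \calS_\calA$ implies $\Phi$ is realizable in $\Pi$; the content is the same.
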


\begin{proof}
  This follows from the fact that for the empty program $\Pi_\emptyset =
  (\calA,\calE,\emptyset)$, $\calS\calE_{\Pi_\emptyset}(\Phi) = \calS_\calA$ for
  each consistent guess $\Phi \subseteq \calE$, and the observation that if
  $\semods{\Pi^\Phi}=\calS_\calA$, then $\Phi$ is realizable in $\Pi$.
\end{proof}

Before syntactically characterizing tautological ELP rules, 
we recall a corresponding result for standard ASP rules from the literature.
For convenience, we shall denote ASP rules of the form (\ref{eq:rule})
as
\begin{equation}\label{eq:rulewithsets}
  A \leftarrow B, \neg C, \neg\neg D,
\end{equation}
using sets of atoms $A$, $B$, $C$, and $D$ 
as is common practice.
In what follows, we denote single-rule logic programs consisting of 
a rule $r$ by
$\Pi_r = (\calA,\calR)$ with $\calR=\{r\}$
and  call $r$ tautological if $\Pi_r$ is strongly equivalent to 
the empty program $\Pi=(\calA,\emptyset)$, i.e.\
$\semods{\Pi_r}=S_{\calA}$.
%
%
The 
following result 
is due to \cite[Lemma~2]{iclp:CabalarPV07}. 

\begin{lemma}\label{lem:lpruletautcharacterization}
  A rule $r$ of the form (\ref{eq:rulewithsets}) 
  is tautological iff ($\alpha$) $A
  \cap B \neq \emptyset$, ($\beta$) $B \cap C \neq \emptyset$, or ($\gamma$) $C \cap D \neq
  \emptyset$.
\end{lemma}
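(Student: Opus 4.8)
The plan is to read the criterion off the SE-model characterization of strong equivalence, applied to the single-rule program $\Pi_r = (\calA, \{r\})$. Since $\semods{\Pi_r} \subseteq S_{\calA}$ always holds, $r$ is tautological exactly when \emph{every} pair $(X,Y)$ with $X \subseteq Y \subseteq \calA$ is an SE-model of $\Pi_r$. Unfolding the definition of SE-model, this amounts to two separate requirements: \textbf{(I)} $Y \models r$ for every $Y \subseteq \calA$; and \textbf{(II)} $X \models \Pi_r^Y$ for every $X \subseteq Y \subseteq \calA$. I would establish each by translating rule satisfaction into set operations on $A, B, C, D$ and then exhibiting explicit extremal witnesses for the failure cases.

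First I would record the set-theoretic translations. Writing $r$ as $A \leftarrow B, \neg C, \neg\neg D$, one has $Y \models \body{r}$ iff $B \cup D \subseteq Y$ and $C \cap Y = \emptyset$, while $Y \models \head{r}$ iff $A \cap Y \neq \emptyset$. For the reduct, a body literal $\neg\neg d$ imposes the survival condition $d \in Y$ (and then vanishes), and $\neg c$ imposes $c \notin Y$; hence $\Pi_r^Y$ is the positive rule $A \leftarrow B$ when $C \cap Y = \emptyset$ and $D \subseteq Y$, and is the empty program (vacuously satisfied by every $X$) otherwise.

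Next I would pinpoint when each requirement fails by choosing minimal witnesses. Requirement (II) fails iff there exist $X \subseteq Y$ with $C \cap Y = \emptyset$, $D \subseteq Y$, $B \subseteq X$, and $A \cap X = \emptyset$; taking $X = B$ and $Y = B \cup D$ shows such $X,Y$ exist precisely when $A \cap B = \emptyset$, $B \cap C = \emptyset$, and $C \cap D = \emptyset$. Thus (II) holds iff $A \cap B \neq \emptyset$, $B \cap C \neq \emptyset$, or $C \cap D \neq \emptyset$, i.e.\ exactly ($\alpha$)$\vee$($\beta$)$\vee$($\gamma$). Analogously, requirement (I) fails iff the body of $r$ can be satisfied while the head is not; the witness $Y = B \cup D$ shows this happens precisely when $A \cap B = \emptyset$, $A \cap D = \emptyset$, $B \cap C = \emptyset$, and $C \cap D = \emptyset$, so (I) holds iff ($\alpha$)$\vee$($\beta$)$\vee$($\gamma$)$\,\vee\,(A \cap D \neq \emptyset)$.

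Finally I would combine the two. The disjunction characterizing (II) is a sub-disjunction of that characterizing (I), so (II) implies (I); hence $r$ is tautological iff (I) and (II) both hold iff (II) holds iff ($\alpha$)$\vee$($\beta$)$\vee$($\gamma$), which is the claim. The step needing the most care is the reduct bookkeeping for the doubly negated atoms---ensuring $\neg\neg d$ is tracked correctly as a \emph{survival} condition rather than a residual body atom---together with the argument that $X = B$ and $Y = B \cup D$ are genuinely the extremal choices, so that if these do not witness a failure then no pair $(X,Y)$ does.
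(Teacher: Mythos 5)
Your proof is correct, but it is worth noting that the paper does not prove this lemma at all: it is imported verbatim as \cite[Lemma~2]{iclp:CabalarPV07}, so your argument is a genuinely different (self-contained) route. Your decomposition into (I) ``$Y \models r$ for all $Y$'' and (II) ``$X \models \Pi_r^Y$ for all $X \subseteq Y$'' is exactly the right unfolding of $\semods{\Pi_r} = S_{\calA}$, your bookkeeping for the reduct is right ($\neg\neg d$ survives as the condition $d \in Y$ and then disappears, so $\Pi_r^Y$ is $A \leftarrow B$ when $C \cap Y = \emptyset$ and $D \subseteq Y$, and empty otherwise), and the witnesses $X = B$, $Y = B \cup D$ do pin down failure of (II) exactly as the conjunction of $A \cap B = \emptyset$, $B \cap C = \emptyset$, $C \cap D = \emptyset$: any failing pair forces these three emptiness conditions, and conversely these conditions make the extremal pair fail. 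The observation that the condition characterizing (II) implies the one characterizing (I) (whose only extra disjunct is $A \cap D \neq \emptyset$) correctly finishes the argument, and it also explains the perhaps surprising asymmetry that $A \cap D \neq \emptyset$ alone (e.g.\ $p \gets \neg\neg p$) is \emph{not} tautological --- the pair $(\emptyset, \{p\})$ is not an SE-model. What the paper's citation buys is brevity and a link to the established simplification literature, where the result is derived inside the logic of here-and-there; what your proof buys is that the criterion becomes a two-line verification in Turner's SE-model vocabulary, the same vocabulary the paper's own Theorem~\ref{thm:elpruletautcharacterization} is proved in, so the ELP-level argument would rest on a fully self-contained base rather than an external lemma.
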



We are now ready to characterize tautological ELP rules. For the sake of
presentation let us write them as 
\begin{equation}\label{eq:elprulewithsets}
  A \leftarrow B, \neg C, \eneg D, \eneg\neg E, \neg\eneg F, \neg\eneg\neg G,
\end{equation}
where, again, each capital letter represents a set of atoms, analogously to
ASP rules of the form (\ref{eq:rulewithsets}).
Note that $D$ here plays a different role than in 
(\ref{eq:rulewithsets}); this is due
to the fact that in ELP rules we have no explicit double negation.

Let us also consider the notion of epistemic reduct w.r.t.\ the above notation.
Analogously to the 
ASP case, let $\Pi_r = (\calA, \calE, \calR)$ be
the single-rule ELP with $\calR = \{ r \}$.  Now, for 
$\Phi \subseteq
\calE$, we have $\Pi_r^\Phi = (\calA, \calE, \calR^\Phi)$ with $\calR^\Phi =
\emptyset$ if $\eneg f \in \Phi$ for some $f \in F$ or if $\eneg \neg g \in
\Phi$ for $g \in G$; otherwise, 
$$
\calR^\Phi  = \{ r^\Phi \} = \{ A \gets B,\neg C, \neg D^\Phi,
\neg\neg E^\Phi, \neg\neg F, \neg G \} 
$$ 
with $D^\Phi=\{ d\in D, \eneg d \notin
\Phi\}$, $E^\Phi=\{ e\in E, \eneg \neg e \notin \Phi\}$. We are now ready to give
our full syntactical characterization of tautological ELP rules. 
It shows that deciding whether a rule is tautological amounts to a
simple syntactic check and can be done individually, for each rule.

\begin{theorem}\label{thm:elpruletautcharacterization}
  %
  An ELP rule $r$ of form (\ref{eq:elprulewithsets})
is tautological iff (a) $A \cap B \neq \emptyset$, (b) $B \cap
  (C \cup G) \neq \emptyset$, (c) $C \cap F \neq \emptyset$, (d) $D \cap F \neq
  \emptyset$, (e) $E \cap G \neq \emptyset$, or (f) $F\cap G\neq\emptyset$.
\end{theorem}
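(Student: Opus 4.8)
The plan is to reduce the epistemic statement to the ASP-level characterization of Lemma~\ref{lem:lpruletautcharacterization} by reasoning guess-by-guess through the epistemic reduct. By Lemma~\ref{lem:tautologicalrulesefunc}, $r$ is tautological iff $\calS\calE_{\Pi_r}(\Phi)=\calS_\calA$ for every consistent guess $\Phi\subseteq\calE$. Since $\semods{\Pi_r^\Phi}=\calS_\calA$ already forces realizability (the observation used in the proof of that lemma) and $\calS_\calA\neq\emptyset$, this is equivalent to requiring $\semods{\Pi_r^\Phi}=\calS_\calA$ for every consistent $\Phi$. For guesses with $\calR^\Phi=\emptyset$ (i.e.\ $\eneg f\in\Phi$ for some $f\in F$, or $\eneg\neg g\in\Phi$ for some $g\in G$) this holds trivially, so only guesses with $\calR^\Phi=\{r^\Phi\}$ matter. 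Reading $r^\Phi=A\gets B,\neg C,\neg D^\Phi,\neg\neg E^\Phi,\neg\neg F,\neg G$ in the form (\ref{eq:rulewithsets}), its positive body is $B$, its singly negated part is $C\cup D^\Phi\cup G$, and its doubly negated part is $E^\Phi\cup F$; hence by Lemma~\ref{lem:lpruletautcharacterization} it is tautological iff $A\cap B\neq\emptyset$, $B\cap(C\cup D^\Phi\cup G)\neq\emptyset$, or $(C\cup D^\Phi\cup G)\cap(E^\Phi\cup F)\neq\emptyset$.

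For the \emph{if}-direction, fix any consistent $\Phi$ with $\calR^\Phi\neq\emptyset$. Then $\eneg f\notin\Phi$ for all $f\in F$ and $\eneg\neg g\notin\Phi$ for all $g\in G$, so the doubly negated part contains $F$, the singly negated part contains $G$, and moreover $D\cap F\subseteq D^\Phi$ and $E\cap G\subseteq E^\Phi$. Each of the six conditions then yields one of the three ASP tautology conditions above: (a) gives $A\cap B\neq\emptyset$ directly; (b) places a common atom of $B$ into $C\cup G\subseteq C\cup D^\Phi\cup G$; and (c), (d), (e), (f) each place a common atom into both negated parts --- via $C\cap F$, via $D\cap F\subseteq D^\Phi$ paired with $F$, via $E\cap G\subseteq E^\Phi$ paired with $G$, and via $F\cap G$, respectively. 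Thus $r^\Phi$ is tautological for every relevant $\Phi$, so $r$ is tautological.

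For the \emph{only-if}-direction I argue by contraposition: assuming all of (a)--(f) fail, I exhibit one consistent guess witnessing $\calS\calE_{\Pi_r}(\Phi)\neq\calS_\calA$. Take $\Phi=\{\eneg d\mid d\in D\}\cup\{\eneg\neg e\mid e\in E\}$, extended arbitrarily (if $\calE$ contains epistemic literals not occurring in $r$) to satisfy consistency there; such additions leave $r^\Phi$ untouched. This guess keeps $\calR^\Phi\neq\emptyset$, since $D\cap F=\emptyset$ and $E\cap G=\emptyset$ (from (d),(e) failing) guarantee $\eneg f\notin\Phi$ for $f\in F$ and $\eneg\neg g\notin\Phi$ for $g\in G$; and it is consistent, since every atom $a$ for which both $\eneg a,\eneg\neg a$ occur in $r$ lies in $(D\cup F)\cap(E\cup G)$, where the subcase $F\cap G$ is empty (from (f) failing) and each remaining subcase has $a\in D$ or $a\in E$, hence a witness in $\Phi$. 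By construction $D^\Phi=E^\Phi=\emptyset$, so $r^\Phi=A\gets B,\neg C,\neg\neg F,\neg G$, whose singly negated part is $C\cup G$ and doubly negated part is $F$. The three ASP tautology conditions become $A\cap B\neq\emptyset$, $B\cap(C\cup G)\neq\emptyset$, and $(C\cap F)\cup(G\cap F)\neq\emptyset$, all of which fail exactly because (a), (b), (c) and (f) fail. Hence $r^\Phi$ is non-tautological, so $\semods{\Pi_r^\Phi}\neq\calS_\calA$ and therefore $\calS\calE_{\Pi_r}(\Phi)\neq\calS_\calA$.

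The main obstacle is this necessity direction, precisely the need to produce a witness guess that is simultaneously consistent and leaves the reduct non-empty. The choice $D^\Phi=E^\Phi=\emptyset$ threads this needle, and it is exactly the failures of (d), (e) and (f) that make it work: they ensure the guesses $\eneg d$ and $\eneg\neg e$ we add never coincide with a forbidden $\eneg f$ or $\eneg\neg g$, and never leave a consistency obligation for an atom of $F\cap G$ unmet. I would also verify carefully, appealing to Theorem~\ref{thm:atomelitdomains}, that enlarging $\calA$ or $\calE$ beyond the symbols occurring in $r$ does not alter the verdict, since such extra symbols do not appear in $r^\Phi$ and thus leave $\semods{\Pi_r^\Phi}$ unchanged.
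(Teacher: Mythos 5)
Your proof is correct and takes essentially the same route as the paper: both directions reduce to the ASP characterization of Lemma~\ref{lem:lpruletautcharacterization} via the epistemic reduct and Lemma~\ref{lem:tautologicalrulesefunc}, and your witness guess $\{\eneg d \mid d \in D\} \cup \{\eneg \neg e \mid e \in E\}$ (with consistency completion) yields exactly the same reduct $A \gets B, \neg C, \neg\neg F, \neg G$ as the paper's dual choice $\Phi = \calE \setminus (\eneg F \cup \eneg\neg G)$. The only cosmetic differences are that the paper dispatches condition (f) by showing every consistent guess empties the reduct, where you handle it uniformly via condition ($\gamma$), and the paper case-splits on realizability at the end rather than folding it into the upfront equivalence as you do.
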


\begin{proof}
  Let $\Pi_r = (\calA, \calE, \calR)$ be the single-rule ELP with $\calR = \{ r
  \}$, and, for an epistemic guess $\Phi$, let $r^\Phi$ denote the unqiue rule in $\calR^\Phi$ in
  case $\calR^\Phi\neq\emptyset$.

  \smallskip\noindent\textbf{($\Rightarrow$)} 
  Let $r$ be an ELP rule satisfying at least one of the
  conditions (a)--(f), and let $\Phi \subseteq \calE$ be a consistent guess. 
  By Lemma~\ref{lem:tautologicalrulesefunc}, we have to
  show that either $\calR^\Phi = \emptyset$ or that $r^\Phi$
  fulfills the conditions of 
  Lemma~\ref{lem:lpruletautcharacterization}. 
  This can be easily verified for
  conditions (a)--(c). For (d), note that if $\eneg f\in\Phi$ for some $f\in F$
  we get $\calR^\Phi = \emptyset$; otherwise for all $f \in F$ it holds that
  $\eneg f \notin \Phi$, and thus that $D \cap F \neq \emptyset$, and it follows
  that $f \in D^\Phi$ for some $f \in F$. Hence we have $\neg f$ and $\neg\neg f$ in
  $r^\Phi$, which is tautological as per
  Lemma~\ref{lem:lpruletautcharacterization}, condition ($\gamma$).   
  The argument is similar for (e),
  Finally, for (f) note that since $\Phi$ is consistent, it must contain 
  one of $\eneg a$ and $\eneg \neg a$ for each
  $a\in F\cap G$. It follows that $\calR^\Phi=\emptyset$.

  \smallskip\noindent\textbf{($\Leftarrow$)} Let $r$ be an ELP rule such that none
  of (a)--(f) hold.  Let $\Phi = \calE \setminus
  (\eneg F \cup \eneg \neg G)$. $\Phi$ is consistent since (f) does not hold. Moreover,  
  $\eneg D \cup \eneg \neg E \subseteq \Phi$
  (since neither (d) nor (e) holds). 
  Suppose, $\Phi$ is realizable in
  $\Pi_r$. 
%
By construction of $\Phi$, 
$\calR^\Phi = 
  \{ A \gets B, \neg C, \neg\neg F, \neg G \}$; by
  Lemma~\ref{lem:lpruletautcharacterization} and the fact that ($\alpha$)--($\gamma$) of that
  lemma do not hold for $r^\Phi$, we have that 
  $r^\Phi$ is not tautological, i.e.\ $\semods{\Pi_{r^\Phi}}\neq S_{\calA}$, and thus
	$\calS\calE_{\Pi_r}(\Phi) \neq \calS_\calA$.  
Otherwise, 
  we get $\calS\calE_{\Pi_r}(\Phi) = \emptyset$. 
Hence, for both cases, 
  $r$ is not tautological by Lemma~\ref{lem:tautologicalrulesefunc}.
\end{proof}


\subsection{Rule Subsumption}

Rule subsumption identifies when a rule $s$ can safely be removed
from a program $\Pi$,
given another rule $r$ is in $\Pi$.

\begin{definition}
  An ELP rule $s$ is \emph{subsumed} by an ELP rule $r$ iff 
  $\Pi_r$ is strongly equivalent to 
  $\Pi_r \cup \Pi_s$. 
  %
\end{definition}

The next result follows from
 the definition of the SE-function for a union of two ELPs (cf.\ proof of
 Theorem~\ref{thm:strongequivalence}).

\begin{lemma}\label{lemma:sub}
  An ELP rule $s$ is subsumed by an ELP rule $r$ iff, 
  for 
  ELPs
  $\Pi_r = (\calA, \calE, \{ r \})$ and $\Pi_s = (\calA, \calE, \{ s \})$, it
  holds that 
  $\calS\calE_{\Pi_r}(\Phi) \subseteq \calS\calE_{\Pi_s}(\Phi)$, for
  all guesses $\Phi \subseteq \calE$.
\end{lemma}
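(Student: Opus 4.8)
The plan is to reduce the subsumption question to a pointwise comparison of the two SE-functions, using the formula for the SE-function of a union of ELPs that was established inside the proof of Theorem~\ref{thm:strongequivalence}. Recall from that proof that, for any guess $\Phi$, the SE-function of a union satisfies
$$
\calS\calE_{\Pi_1 \cup \Pi}(\Phi) = \calS\calE_{\Pi_1}(\Phi) \cap \calS\calE_\Pi(\Phi)
$$
whenever $\Phi$ is realizable in that intersection, and equals $\emptyset$ otherwise. By Corollary~\ref{corr:strongequivalence}, $s$ is subsumed by $r$ --- that is, $\Pi_r$ and $\Pi_r \cup \Pi_s$ are strongly equivalent --- exactly when $\calS\calE_{\Pi_r} = \calS\calE_{\Pi_r \cup \Pi_s}$. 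So the whole statement comes down to showing that this equality of SE-functions holds iff $\calS\calE_{\Pi_r}(\Phi) \subseteq \calS\calE_{\Pi_s}(\Phi)$ for every guess $\Phi$.

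First I would apply the union formula to $\Pi_r \cup \Pi_s$, giving, for each $\Phi$ where the relevant realizability holds,
$$
\calS\calE_{\Pi_r \cup \Pi_s}(\Phi) = \calS\calE_{\Pi_r}(\Phi) \cap \calS\calE_{\Pi_s}(\Phi).
$$
Fixing a guess $\Phi$, a set-theoretic identity gives that $\calS\calE_{\Pi_r}(\Phi) \cap \calS\calE_{\Pi_s}(\Phi) = \calS\calE_{\Pi_r}(\Phi)$ precisely when $\calS\calE_{\Pi_r}(\Phi) \subseteq \calS\calE_{\Pi_s}(\Phi)$. Thus the right-to-left direction is essentially immediate: if the inclusion holds for all $\Phi$, then the intersection recovers $\calS\calE_{\Pi_r}(\Phi)$ at every guess (and the realizability side-condition is automatically satisfied, since a realizable $\calS\calE_{\Pi_r}(\Phi)$ is contained in $\calS\calE_{\Pi_s}(\Phi)$ and hence realizable in the intersection too), so the two SE-functions agree everywhere and subsumption follows. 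For left-to-right I would argue the contrapositive: if the inclusion fails at some $\Phi$, there is a pair $(X,Y) \in \calS\calE_{\Pi_r}(\Phi) \setminus \calS\calE_{\Pi_s}(\Phi)$; then $(X,Y)$ lies in $\calS\calE_{\Pi_r}(\Phi)$ but not in the intersection $\calS\calE_{\Pi_r}(\Phi) \cap \calS\calE_{\Pi_s}(\Phi)$, so $\calS\calE_{\Pi_r}(\Phi) \neq \calS\calE_{\Pi_r \cup \Pi_s}(\Phi)$, and the SE-functions differ, contradicting strong equivalence.

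The main obstacle I anticipate is bookkeeping around the realizability side-condition built into the union formula, rather than the set-theoretic core. Specifically, I need to be careful that when $\calS\calE_{\Pi_r}(\Phi) \subseteq \calS\calE_{\Pi_s}(\Phi)$ holds, the definition of $\calS\calE_{\Pi_r \cup \Pi_s}(\Phi)$ does not silently collapse to $\emptyset$ via the ``otherwise'' branch: one must confirm that realizability of $\Phi$ in $\calS\calE_{\Pi_r}(\Phi)$ transfers to realizability in the intersection, which follows because under the inclusion the intersection equals $\calS\calE_{\Pi_r}(\Phi)$ itself. Conversely, when $\calS\calE_{\Pi_r}(\Phi)$ is empty the inclusion holds trivially and both sides of the union formula are $\emptyset$, so those guesses need no special treatment. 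Once these edge cases are handled, the argument is a direct consequence of the union characterization and requires no new constructions.
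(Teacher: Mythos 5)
Your proof is correct and takes essentially the same route the paper intends: it reduces subsumption to equality of SE-functions via Corollary~\ref{corr:strongequivalence} and then applies the union formula for SE-functions established in the proof of Theorem~\ref{thm:strongequivalence}. Your explicit handling of the realizability side-condition (both when $\calS\calE_{\Pi_r}(\Phi)$ is empty and when the inclusion forces the intersection to equal $\calS\calE_{\Pi_r}(\Phi)$) correctly fills in the details that the paper's one-line justification leaves implicit.
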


%
  %

Clearly, a tautological rule $s$ is subsumed by any other rule, hence 
in what follows we focus on subsumption of non-tautological rules only. 
Again, we 
exploit known results from ASP.
With some abuse of terminology, we say that an ASP rule $r$
subsumes another such rule $s$
iff $\Pi_r=(\calA,\{r\})$ is strongly equivalent to $\Pi_{r,s}=(\calA,\{r,s\})$,
i.e.\ iff
$\semods{\Pi_r}\subseteq\semods{\Pi_s}$.
%
We first adapt a result from
\cite[Theorem~7]{iclp:CabalarPV07}
to our notation.

\begin{lemma}\label{lemma:subrule}
An ASP rule 
$r=A\leftarrow B,\neg C,\neg\neg D$ 
subsumes a non-tautological ASP rule 
$s=A'\leftarrow B',\neg C',\neg\neg D'$  
iff the following conditions jointly hold:
($\alpha$) $A\subseteq A'\cup C'$,
($\beta$) $B\subseteq B'\cup D'$,
($\beta$') if $A\cap (A'\setminus C')\neq\emptyset$ then $B\subseteq B'$,
($\gamma$) $C\subseteq C'$, and
($\delta$) $D\subseteq B'\cup D'$.
\end{lemma}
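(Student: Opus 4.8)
The plan is to argue entirely at the level of SE-models, since by definition $r$ subsumes the non-tautological rule $s$ exactly when $\semods{\Pi_r}\subseteq\semods{\Pi_s}$. First I would record the explicit description of the SE-models of a single rule of form (\ref{eq:rulewithsets}): a pair $(X,Y)$ with $X\subseteq Y\subseteq\calA$ lies in $\semods{\Pi_r}$ iff (i) whenever $B\subseteq Y$, $C\cap Y=\emptyset$ and $D\subseteq Y$ we have $A\cap Y\neq\emptyset$ (that is, $Y\models r$), and (ii) whenever $C\cap Y=\emptyset$ and $D\subseteq Y$ (so the GL-reduct retains $A\gets B$) and $B\subseteq X$, we have $A\cap X\neq\emptyset$ (that is, $X\models r^Y$). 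The analogous two conditions describe $\semods{\Pi_s}$ in terms of $A',B',C',D'$. Both descriptions are routine from the definitions of model and GL-reduct.

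For ($\Leftarrow$) I would assume ($\alpha$)--($\delta$) and ($\beta'$), take an arbitrary $(X,Y)\in\semods{\Pi_r}$, and verify (i) and (ii) for $s$. For $Y\models s$, assuming the body of $s$ active in $Y$ (i.e.\ $B'\subseteq Y$, $C'\cap Y=\emptyset$, $D'\subseteq Y$), conditions ($\gamma$), ($\beta$) and ($\delta$) force the body of $r$ active in $Y$, so $A\cap Y\neq\emptyset$; then ($\alpha$) together with $C'\cap Y=\emptyset$ upgrades a witness in $A\cap Y$ to one in $A'\cap Y$. The reduct condition (ii) for $s$ is the delicate point: from the premises $C'\cap Y=\emptyset$, $D'\subseteq Y$, $B'\subseteq X$ one first derives $C\cap Y=\emptyset$, $D\subseteq Y$ and $B'\subseteq Y$, and then splits on whether $B\subseteq X$. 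If $B\subseteq X$, condition (ii) for $r$ yields $A\cap X\neq\emptyset$ and ($\alpha$) finishes as before. If $B\not\subseteq X$, then $B\not\subseteq B'$ (since $B'\subseteq X$), so the contrapositive of ($\beta'$) gives $A\cap(A'\setminus C')=\emptyset$, whence ($\alpha$) forces $A\subseteq C'$ and thus $A\cap Y=\emptyset$; but $B\subseteq B'\cup D'\subseteq Y$, $C\cap Y=\emptyset$, $D\subseteq Y$ make the body of $r$ active in $Y$, so $Y\models r$ gives $A\cap Y\neq\emptyset$, a contradiction. Hence the case $B\not\subseteq X$ cannot occur, and (ii) holds.

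For ($\Rightarrow$) I would argue by contraposition: if any one of the five conditions fails, I construct a pair $(X,Y)\in\semods{\Pi_r}\setminus\semods{\Pi_s}$. In each case the idea is to choose $Y$ (and $X$) so that the body of $s$ is active while its head is missed---either at the model level ($A'\cap Y=\emptyset$) or at the reduct level ($A'\cap X=\emptyset$)---and simultaneously to use an element witnessing the failed inclusion (e.g.\ some $c\in C\setminus C'$ for ($\gamma$), or some $b\in B\setminus(B'\cup D')$ for ($\beta$)) to keep $r$'s own SE-model conditions satisfied by deactivating or trivializing its body. Here the hypothesis that $s$ is non-tautological is essential: via Lemma~\ref{lem:lpruletautcharacterization} it supplies $A'\cap B'=\emptyset$, $B'\cap C'=\emptyset$ and $C'\cap D'=\emptyset$, which guarantee that the chosen interpretations are well defined and genuinely separate the two programs. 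I expect this witness-construction case analysis to be the main obstacle, since ($\beta$) and ($\beta'$) interact and the reduct-level versus model-level violations must be handled separately; the correspondence with \cite[Theorem~7]{iclp:CabalarPV07} can be used both to organize the cases and to shortcut the more tedious constructions.
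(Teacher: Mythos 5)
Your proposal takes a genuinely different route from the paper, for the simple reason that the paper contains no proof of this lemma at all: the statement is introduced as an adaptation of Theorem~7 of \cite{iclp:CabalarPV07} to the paper's notation (with double negation $\neg\neg D$ in rule bodies playing the role that negated head atoms play in that work), and the result is imported by citation. You instead prove it from scratch in Turner's SE-model framework, which is the framework the rest of the paper operates in. Your explicit description of $\semods{\Pi_r}$ for a single rule is correct, and your ($\Leftarrow$) direction is complete and sound; in particular, the treatment of the reduct-level condition by splitting on $B\subseteq X$ and refuting the case $B\not\subseteq X$ (the contrapositive of ($\beta'$) together with ($\alpha$) gives $A\subseteq C'$, contradicting $Y\models r$ with an active body) is exactly the right argument. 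What each approach buys: yours keeps the development self-contained and checkable inside the paper; the paper's citation avoids the case analysis but silently relies on the translation between head negation in \cite{iclp:CabalarPV07} and the body double negation used here.

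The one caveat is that your ($\Rightarrow$) direction is a strategy rather than a proof, and it is the harder half. It does go through with witnesses of exactly the kind you describe. Concretely: if ($\beta$) fails with $b\in B\setminus(B'\cup D')$, or ($\delta$) fails with $d\in D\setminus(B'\cup D')$, take $(X,Y)=(B',\,B'\cup D')$; if ($\gamma$) fails with $c\in C\setminus C'$, take $(B',\,B'\cup D'\cup\{c\})$; if ($\alpha$) fails with $a\in A\setminus(A'\cup C')$, take $(B'\cup\{a\},\,B'\cup D'\cup\{a\})$; if ($\beta'$) fails, you may assume ($\beta$) holds (otherwise the previous case applies), so any $b\in B\setminus B'$ lies in $D'$, and $(B',\,B'\cup D'\cup\{a\})$ with $a\in A\cap(A'\setminus C')$ works. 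In every case the assumption that $s$ is non-tautological gives $A'\cap B'=B'\cap C'=C'\cap D'=\emptyset$, hence $C'\cap Y=\emptyset$ and $D'\subseteq Y$, so $s^Y=\{A'\leftarrow B'\}$, $B'\subseteq X$ and $A'\cap X=\emptyset$, i.e.\ $X\not\models s^Y$ and $(X,Y)\notin\semods{\Pi_s}$; on the other side one checks that $Y\models r$ (either $r$'s body is inactive in $Y$ or its head meets $Y$) and $X\models r^Y$ (either the reduct is empty, or $B\not\subseteq X$, or $A\cap X\neq\emptyset$), so $(X,Y)\in\semods{\Pi_r}$. This confirms your expectation that the ($\beta$)/($\beta'$) interaction is the only delicate point---it is resolved by ordering the cases---and shows that the fallback to the cited theorem is not needed, although invoking it is precisely what the paper itself does.
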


We can now give a syntactic criterion for
ELP rule subsumption, which turns out to be somewhat involved, 
but still feasible to check.
It requires
two technical notions that link a rule $r$ to a rule $s$ whenever
$r$ has sufficiently many elements that are not ``absorbed'' by default-negated epistemic
literals in $s$.

%

We are now ready to give our syntactic characterization.

\begin{theorem}\label{thm:sub}
Let 
$r$ be an ELP-rule of form (\ref{eq:elprulewithsets})
and 
$s =A' \leftarrow B', \neg C', \eneg D', \eneg\neg E', \neg\eneg F',
 \neg\eneg\neg G'$ be non-tautological.
%
Furthermore, let $r\rhd s$ 
denote that $|(A\cup C\cup D)\setminus G'|>1$
or
$(B\cup E)\setminus F'\neq \emptyset$,
and 
$r\RHD s$  denote that
$(A\cup C\cup D)\setminus G'\neq\emptyset$
or 
$|(B\cup E)\setminus F'|>1$.

\noindent
Then, 
$r$ subsumes $s$
iff 
the following conditions jointly hold:
\begin{enumerate}
\item[(a)] $A\subseteq A'\cup C'\cup D'\cup G'$;
\item[(a$^*$)] if $r\rhd s$ then $A\subseteq A'\cup C'\cup G'$;
\item[(b)] $B\subseteq B'\cup E'\cup F'$;
\item[(b$^*$)] if $r\RHD s$ then $B\subseteq B'\cup F'$;
\item[(b')] if $A \cap (A'\setminus (C' \cup D' \cup G'))\neq\emptyset$ then $B\subseteq B'$;
\item[(b$^*$')] if $r\rhd s$ and $A\cap (A'\setminus (C' \cup G'))\neq\emptyset$ then $B\subseteq B'$;
\item[(c)] 
$C\cup D\subseteq C'\cup D'\cup G'$;
\item[(c$^*$)] 
 if 
 $r\rhd s$
 then $C\subseteq C'\cup G'$;
\item[(d)] $E\subseteq B'\cup E'\cup F'$;
\item[(e)] $F\subseteq F'$ and  $G\subseteq G'$.
\end{enumerate}
\end{theorem}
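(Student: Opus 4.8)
The plan is to use Lemma~\ref{lemma:sub} to reduce ``$r$ subsumes $s$'' to the pointwise inclusion $\calS\calE_{\Pi_r}(\Phi)\subseteq\calS\calE_{\Pi_s}(\Phi)$ for every guess $\Phi\subseteq\calE$, and then to turn each such inclusion into the ASP-level test of Lemma~\ref{lemma:subrule} applied to the reducts $r^\Phi$ and $s^\Phi$. First I would record the reducts explicitly: when non-empty, $r^\Phi$ is the standard rule with head $A$, positive body $B$, singly negated atoms $C\cup D^\Phi\cup G$, and doubly negated atoms $E^\Phi\cup F$, and symmetrically for $s^\Phi$ using the primed sets ${D'}^{\Phi}$, ${E'}^{\Phi}$. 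Inconsistent guesses give $\emptyset\subseteq\emptyset$ and are discarded. A key preliminary observation, which makes the backward direction almost automatic, is that $\semods{\Pi_r^\Phi}\subseteq\semods{\Pi_s^\Phi}$ already forces $\mods{\Pi_r^\Phi}\subseteq\mods{\Pi_s^\Phi}$ (taking the diagonal SE-models $(Y,Y)$); hence any set of models witnessing realizability of $\Phi$ in $\Pi_r$ also witnesses it in $\Pi_s$, so realizability is inherited and never has to be checked separately once the ASP-level inclusion holds.

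Next I would split, for each consistent $\Phi$, on whether the two reducts are empty. If $\calR_s^\Phi=\emptyset$ then $\calS\calE_{\Pi_s}(\Phi)=\calS_\calA$ and the inclusion is trivial. If both reducts are non-empty and $\Phi$ is realizable in $\Pi_r$, then by the realizability-inheritance observation and Lemma~\ref{lemma:subrule} the inclusion amounts exactly to the conjunction of $(\alpha)$--$(\delta)$ for the pair $(r^\Phi,s^\Phi)$ (and is again trivial whenever $s^\Phi$ happens to be tautological); for $\Phi$ not realizable in $\Pi_r$ the left-hand side is empty and nothing is required. The only genuinely new case is $\calR_r^\Phi=\emptyset$ while $\calR_s^\Phi\neq\emptyset$: here $\calS\calE_{\Pi_r}(\Phi)=\calS_\calA$ must sit inside $\semods{\Pi_s^\Phi}$, i.e.\ $s^\Phi$ would have to be tautological, which---since $s$ is non-tautological---can only be prevented by guaranteeing that emptying $r$'s reduct always empties $s$'s as well. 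This is precisely condition $(e)$, since $\calR_r^\Phi=\emptyset$ is triggered by some $\eneg f\in\Phi$ with $f\in F\subseteq F'$ or some $\eneg\neg g\in\Phi$ with $g\in G\subseteq G'$.

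The main work is to translate ``$(\alpha)$--$(\delta)$ hold for every $\Phi$ realizable in $\Pi_r$'' into the syntactic conditions, the difficulty being that $D^\Phi,E^\Phi,{D'}^{\Phi},{E'}^{\Phi}$ vary with $\Phi$. The weakest binding guesses keep $s$'s reduct maximal (${D'}^{\Phi}=D'$, ${E'}^{\Phi}=E'$) and are always realizable; matching $(\alpha)$--$(\delta)$ against $(r^\Phi,s^\Phi)$ for these yields the unstarred conditions $(a),(b),(b'),(c),(d)$. Stronger constraints arise only when one can pick a guess that shrinks $s$'s reduct by placing $\eneg d'\in\Phi$ (to drop $d'\in D'$ from ${D'}^{\Phi}$) or $\eneg\neg e'\in\Phi$ (to drop $e'\in E'$ from ${E'}^{\Phi}$) while $\Phi$ stays realizable in $\Pi_r$; such a guess must exhibit witness models of $r^\Phi$ falsifying $d'$ (resp.\ satisfying $e'$). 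Since remaining in the non-trivial case forbids $\eneg f'\in\Phi$ for $f'\in F'$ and $\eneg\neg g'\in\Phi$ for $g'\in G'$, every witness has $F'\subseteq Y$ and $G'\cap Y=\emptyset$ pinned, so the usable freedom sits exactly in the atoms of $(A\cup C\cup D)\setminus G'$ and $(B\cup E)\setminus F'$. Having enough such free atoms is what $r\rhd s$ and $r\RHD s$ encode, and they activate the starred conditions $(a^{*}),(b^{*}),(b^{*\prime}),(c^{*})$; note that only the \emph{unconditional} parts of the left-hand sides ($A$ in $(\alpha)$, $B$ in $(\beta),(\beta')$, $C$ in $(\gamma)$) need strengthening, because for elements of $D$ or $E$ the very act of shrinking $s$'s reduct simultaneously removes them from $r^\Phi$, so they require no starred variant.

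The hard part will be the realizability analysis behind the previous paragraph: proving that $r\rhd s$ (resp.\ $r\RHD s$) is \emph{exactly} the condition under which a guess minimizing ${D'}^{\Phi}$ (resp.\ ${E'}^{\Phi}$) can be made realizable in $\Pi_r$, and in particular getting the thresholds right---why a single free atom suffices in one disjunct while two are needed in the other, reflecting whether one witness model or two distinct witnesses must be produced from the single rule $r^\Phi$ under the $F',G'$-pinning. I expect this to need two auxiliary lemmas characterizing, for a single standard rule, when a prescribed collection of ``absent'' and ``present'' atom requirements is met by its models. With those in hand, the forward direction would pick, for each failing condition, the explicit realizable guess exposing the violated instance of $(\alpha)$--$(\delta)$ (or the tautology failure behind $(e)$), and the backward direction would check that $(a)$--$(e)$ together with the starred conditions imply $(\alpha)$--$(\delta)$ for every realizable $\Phi$, the desired inclusion then following together with the inherited realizability.
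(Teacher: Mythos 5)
Your skeleton matches the paper's proof — reduction via Lemma~\ref{lemma:sub}, the realizability-inheritance observation obtained from the diagonal SE-models, condition (e) as the guard against $\Pi_r^\Phi=\emptyset$ while $\Pi_s^\Phi\neq\emptyset$, and a per-guess application of Lemma~\ref{lemma:subrule} — but there is a concrete false step where you derive the unstarred conditions. You claim that the guesses keeping the reducts maximal ($D^\Phi=D$, $E^\Phi=E$, ${D'}^{\Phi}=D'$, ${E'}^{\Phi}=E'$) ``are always realizable.'' They are not: by condition~3 of Definition~\ref{def:compatibility}, every epistemic literal left \emph{outside} the guess pins \emph{all} witness models, so such a guess forces every model in a compatible set to make all of $D\cup D'\cup F\cup F'$ true and all of $E\cup E'\cup G\cup G'$ false simultaneously. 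Whenever these sets overlap the guess is not even consistent, hence never realizable; for instance $r = a \gets \eneg d, \eneg\neg d$ (non-tautological by Theorem~\ref{thm:elpruletautcharacterization}, since none of (a)--(f) applies) has $D\cap E=\{d\}$, and the same problem arises with $D'\cap E'\neq\emptyset$, $D\cap E'\neq\emptyset$, etc. In all such cases your route to the necessity of (a), (b), (b'), (c), (d) collapses, because the inclusion of SE-functions holds vacuously for the guesses you propose and forces nothing.

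The paper's construction goes in the opposite direction, and that is the missing idea: it starts from the near-total guess $\Phi=\calE\setminus(\eneg F'\cup\eneg\neg G')$, under which ${D'}^{\Phi}$ and ${E'}^{\Phi}$ vanish (using non-tautology of $s$), and then removes a \emph{single} epistemic literal, $\Psi=\Phi\setminus\{\eneg x\}$ or $\Psi=\Phi\setminus\{\eneg\neg x\}$, so as to re-expose exactly the one atom $x$ witnessing the violation of the condition at hand. The compatible witness set then only has to pin $F'$ true, $G'$ false, and the single atom $x$ — consistent because $F'\cap G'=\emptyset$ ($s$ non-tautological) and $x\notin F'\cup G'$ — while every other atom remains free to supply the witnesses required by condition~2 of compatibility. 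Note also that when $x\in D'$ this removal automatically puts $x$ back into ${D'}^{\Psi}$, which is exactly why $D'$ survives as an absorber in the unstarred condition (c); your plan recovers this fact only incidentally. Finally, the portion of your proposal carrying essentially all the difficulty — that $r\rhd s$ and $r\RHD s$ are precisely the thresholds separating realizable from non-realizable guesses, needed both for the necessity of the starred conditions and, in the sufficiency direction, to show that any failure of ($\alpha$), ($\beta$), ($\beta$'), ($\gamma$) on the reducts forces non-realizability of $\Phi$ in $\Pi_r$ — is delegated to two unstated auxiliary lemmas, so the core of the argument is not actually present.
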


Note that, in the above, items (a)--(d) generalize the same items in
Lemma~\ref{lemma:subrule}, and the proof for those parts is a generalization of
the proof of that lemma. The full proof can be found in
Appendix~\ref{sec:thm:sub:proof}. 

Interestingly, applying the above theorem shows, for example, that  
$r= p \gets \eneg p$ subsumes $s= p \gets \neg p $ and vice versa 
(in particular, since neither $r\rhd s$ nor $s\rhd r$), showing that 
the two programs in Example~\ref{ex:new} are strongly equivalent.

\section{Conclusions}\label{sec:conclusions}

In this paper, a simple characterization of strong equivalence for epistemic logic programs was proposed, which also demonstrates that various notions of strong equivalence coincide. The characterization generalizes strong equivalence for plain logic programs, and, somewhat unexpectedly, shows that checking strong equivalence for ELPs is 
not harder than for ASP in terms of computational complexity. The results also give rise to a syntactic characterization of tautological ELP rules and ELP rule subsumption. 

As another byproduct, we studied the relationship between two formalizations of
CWA, Gelfond-CWA and Shen-Eiter-CWA, as our running example. Indeed, while they
are (ordinarily) equivalent, they are not strongly equivalent, as shown in
Examples~\ref{ex:cwa-running3} and \ref{ex:cwa-running4}. In particular,
Example~\ref{ex:cwa-running4} shows that, combining $\Pi$ in that example with the
Shen-Eiter-CWA rule, yields the world view $\{ \{ p' \} \}$, which does not seem
intuitive in this setting. However, in \cite{ai:ShenE16}, it seems that the CWA
rule is proposed for Reiter's CWA \cite{adbt:Reiter77}, and thus was not intended to work with rules
containing default negation.

For future work, 
we want to apply our findings to
obtain a normal form for ELPs, 
as was done for ASP in \cite{iclp:CabalarPV07}.
Furthermore,
we plan to study weaker forms of equivalence \cite{tocl:EiterFW07} for ELPs. In particular, 
it will be interesting to see whether our notion of SE-functions can be similarly re-used for 
characterizing uniform equivalence like SE-models did serve as a basis for UE-models.

\section*{Acknowledgements} Michael Morak and Stefan Woltran were supported by
the Austrian Science Fund (FWF) under grant number Y698.

\appendix

\section{Proof of Theorem~\ref{thm:atomelitdomains}}
\label{sec:thm:atomelitdomains:proof}

We
start by re-stating a folklore result from the world of ASP, namely that the
universe of atoms of a logic program can be extended without changing its answer
sets.

\begin{proposition}\label{prop:atomdomainlp}
  Let $\Pi = (\calA, \calR)$ be a logic program and let $\Pi' = (\calA', \calR)$
  be a logic program with the same set of rules, but with $\calA' \supset
  \calA$.  Then, $\answersets{\Pi} = \answersets{\Pi'}$.
\end{proposition}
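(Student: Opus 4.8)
The plan is to prove the two inclusions $\answersets{\Pi} \subseteq \answersets{\Pi'}$ and $\answersets{\Pi'} \subseteq \answersets{\Pi}$ separately. The guiding intuition is that the two programs share exactly the same rule set $\calR$, and every rule mentions only atoms from $\calA$; hence the additional atoms in $\calA' \setminus \calA$ occur in no rule head and can never be supported, so they cannot belong to any answer set. Throughout, I would use the fact that the relation $M \models r$, the set of models $\mods{\cdot}$, and the GL-reduct $\Pi^M$ all depend only on the rules and on the value of $M$ on the atoms actually occurring in those rules. Since $\calR$ is common to both programs, $\Pi^M$ and $(\Pi')^M$ consist of literally the same rules for every interpretation $M$.

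For the inclusion $\answersets{\Pi} \subseteq \answersets{\Pi'}$, I would take $M \in \answersets{\Pi}$, so that $M \subseteq \calA \subseteq \calA'$. Condition~(1) of Definition~\ref{def:answerset}, namely $M \models \Pi'$, is immediate because the rules are identical and $M \models \Pi$. For condition~(2) I would argue by contradiction: any witness $M' \subset M$ with $M' \models (\Pi')^M$ is automatically a subset of $\calA$ and satisfies $M' \models \Pi^M$ (the same reduct rules), contradicting that $M$ is an answer set of $\Pi$.

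The reverse inclusion is where the real work lies. Given $M \in \answersets{\Pi'}$ with $M \subseteq \calA'$, the first and main step is to show $M \subseteq \calA$. I would set $M' = M \cap \calA$ and observe that, since every rule of $(\Pi')^M$ mentions only atoms of $\calA$, the interpretations $M'$ and $M$ agree on all atoms relevant to these rules, so $M' \models (\Pi')^M$ exactly when $M \models (\Pi')^M$; and the latter holds because any model of $\Pi'$ is a model of its GL-reduct. If $M$ contained an atom outside $\calA$, then $M' \subset M$ would be a strict subset satisfying $(\Pi')^M$, contradicting minimality (condition~(2)). Hence $M = M' \subseteq \calA$. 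Once this is established, verifying $M \in \answersets{\Pi}$ is routine: $M \models \Pi$ since the rules coincide, and any would-be witness $M' \subset M$ against $\Pi^M$ equally witnesses against $(\Pi')^M$, which is impossible.

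I expect the only genuine obstacle to be this ``foundedness'' argument that an answer set of $\Pi'$ cannot contain stray atoms from $\calA' \setminus \calA$; everything else reduces to the observation that the two programs carry the same rules and therefore the same reducts, making the model- and minimality-checks transfer verbatim between $\Pi$ and $\Pi'$.
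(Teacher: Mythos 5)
Your proof is correct and follows essentially the same route as the paper, which merely remarks that atoms in $\calA' \setminus \calA$ occur in no rule and hence must be false in every answer set of $\Pi'$. Your explicit minimality argument (comparing $M$ with $M \cap \calA$ against the common reduct) is exactly the formal content behind that remark, just written out in full.
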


It is easy to see that the above proposition holds by noting that any atom $a
\in \calA' \setminus \calA$ can clearly not appear in the rules $\calR$, and
hence any such $a$ must be false in any answer set of $\Pi'$ (i.e.\ for all $M
\in \answersets{\Pi'}$ it holds that $a \not\in M$). From this result, it is
easy to obtain a similar result for ELPs.

\begin{proposition}\label{prop:atomdomainelp}
  Let $\Pi = (\calA, \calE, \calR)$ be an ELP and let $\Pi' = (\calA', \calE,
  \calR)$ be an ELP with the same set of rules and domain of epistemic literals,
  but with $\calA' \supset \calA$. Then, $\cwvs{\Pi} = \cwvs{\Pi'}$.
\end{proposition}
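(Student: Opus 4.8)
The plan is to reduce the claim to the plain-ASP statement of Proposition~\ref{prop:atomdomainlp} by arguing guess-by-guess. First I would fix an arbitrary epistemic guess $\Phi \subseteq \calE$; since $\Pi$ and $\Pi'$ share the same domain $\calE$ of epistemic literals, exactly the same guesses are available for both programs. I would then observe that the epistemic reduct depends only on $\calR$ and $\Phi$ and not on the atom universe, so $\Pi^\Phi = (\calA, \calR^\Phi)$ and $(\Pi')^\Phi = (\calA', \calR^\Phi)$ have identical rule sets and differ only in their atom universe. Applying Proposition~\ref{prop:atomdomainlp} to this pair of plain logic programs yields $\answersets{\Pi^\Phi} = \answersets{(\Pi')^\Phi}$; in particular, by (the reasoning behind) that proposition no atom of $\calA' \setminus \calA$ occurs in any answer set, so both answer-set collections consist of exactly the same subsets of $\calA$.

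Next I would verify that the compatibility condition of Definition~\ref{def:compatibility} is insensitive to the enlarged domain. The key point is that every epistemic literal $\eneg \ell \in \calE$ mentions only atoms from $\calA$, so the truth of $\ell$ in an interpretation $I$ depends solely on $I \cap \calA$. Since the candidate collection $\calM = \answersets{\Pi^\Phi} = \answersets{(\Pi')^\Phi}$ consists of identical subsets of $\calA$, each of the three clauses of Definition~\ref{def:compatibility}, namely non-emptiness, the existential witness for literals in $\Phi$, and the universal condition for literals in $\calE \setminus \Phi$, holds for $\calM$ with respect to $\Pi$ exactly when it holds with respect to $\Pi'$. Hence $\calM$ is $\Phi$-compatible w.r.t.\ $\calE$ in one program iff it is in the other.

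Combining these two observations with Definition~\ref{def:candidateworldview}, a set $\calM$ is a CWV of $\Pi$ witnessed by $\Phi$ iff $\calM = \answersets{\Pi^\Phi}$ and $\calM$ is $\Phi$-compatible, which by the above holds iff $\calM = \answersets{(\Pi')^\Phi}$ and $\calM$ is $\Phi$-compatible, i.e.\ iff $\calM$ is a CWV of $\Pi'$ witnessed by $\Phi$. Ranging over all guesses $\Phi \subseteq \calE$ then gives $\cwvs{\Pi} = \cwvs{\Pi'}$. I expect the only delicate step to be the compatibility argument: one must make sure that the universal clause cannot be violated by the additional atoms, which is precisely guaranteed by the facts that the answer sets of the reduct never contain atoms outside $\calA$ and that all epistemic literals in $\calE$ range over $\calA$, so the extra atoms are invisible to both the reduct and the compatibility test.
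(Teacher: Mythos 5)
Your proposal is correct and follows essentially the same route as the paper: both argue guess-by-guess that the epistemic reducts of $\Pi$ and $\Pi'$ share the same rules and hence the same answer sets (the paper derives this directly from the observation that atoms in $\calA' \setminus \calA$ occur nowhere in $\calR^\Phi$, which is exactly the reasoning behind Proposition~\ref{prop:atomdomainlp} that you invoke), so the CWVs coincide. Your explicit check that the compatibility condition of Definition~\ref{def:compatibility} is unaffected by the enlarged atom universe is left implicit in the paper's proof, and making it explicit is a small but welcome addition in rigor rather than a different approach.
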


\begin{proof}
  Note that, clearly, any atom $a \in \calA' \setminus \calA$ cannot appear
  anywhere in $\calR$ or $\calE$. Hence, for every guess $\Phi \subseteq \calE$,
  $a$ also does not appear in the rules of the epistemic reduct $\Pi'^\Phi$ and
  thus also not in any answer set of $\Pi'^\Phi$. But then,
  $\answersets{\Pi'^\Phi} = \answersets{\Pi^\Phi}$.
\end{proof}

From the above, we can see that the atom domain of an ELP can be arbitrarily
extended without changing the CWVs of the ELP. We will now show that the same is
the case for the domain of epistemic literals.

\begin{proposition}\label{prop:elitdomain}
  Let $\Pi = (\calA, \calE, \calR)$ be an ELP and let $\Pi' = (\calA, \calE',
  \calR)$ be an ELP with the same set of rules and atom domain, but with $\calE'
  \supset \calE$ and $\calE' \setminus \calE = \{ \eneg \ell \}$. Then,
  $\cwvs{\Pi} = \cwvs{\Pi'}$.
\end{proposition}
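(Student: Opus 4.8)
The plan is to prove both inclusions $\cwvs{\Pi} \subseteq \cwvs{\Pi'}$ and $\cwvs{\Pi'} \subseteq \cwvs{\Pi}$ directly from Definition~\ref{def:candidateworldview}, exploiting a single structural observation: since the rule set $\calR$ is identical in both programs and, by the definition of an ELP, every epistemic literal occurring in $\calR$ already belongs to $\calE$, the new literal $\eneg\ell \in \calE' \setminus \calE$ does not occur anywhere in $\calR$. Consequently, for every guess $\Phi' \subseteq \calE'$ the epistemic reduct $\Pi'^{\Phi'}$ replaces exactly the same occurrences as $\Pi^{\Phi' \cap \calE}$, so these are literally the same logic program over $\calA$, and therefore $\answersets{\Pi'^{\Phi'}} = \answersets{\Pi^{\Phi' \cap \calE}}$. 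The entire effect of adding $\eneg\ell$ is thus confined to the compatibility check of Definition~\ref{def:compatibility}, and never touches the answer-set computation.

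For $\cwvs{\Pi} \subseteq \cwvs{\Pi'}$, I would start from a CWV $\calM$ of $\Pi$ witnessed by a guess $\Phi \subseteq \calE$ and build a guess $\Phi' \subseteq \calE'$ for $\Pi'$ with $\Phi' \cap \calE = \Phi$, fixing the membership of $\eneg\ell$ according to $\calM$: put $\eneg\ell$ into $\Phi'$ exactly when some $I \in \calM$ has $I \nmodels \ell$, and leave it out otherwise. By the observation above, $\answersets{\Pi'^{\Phi'}} = \answersets{\Pi^\Phi} = \calM$. Compatibility of $\calM$ with $\Phi'$ w.r.t.\ $\calE'$ then reduces to the compatibility already assumed for $\Phi$ w.r.t.\ $\calE$ (conditions~2 and~3 for every literal in $\calE$ are unchanged) plus the single extra condition for $\eneg\ell$, which holds by the way $\eneg\ell$ was placed. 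For the reverse inclusion $\cwvs{\Pi'} \subseteq \cwvs{\Pi}$, given a CWV $\calM$ of $\Pi'$ with guess $\Phi'$, I would simply take $\Phi = \Phi' \cap \calE$; the reducts coincide so $\answersets{\Pi^\Phi} = \calM$, and compatibility w.r.t.\ $\calE$ is obtained by discarding the (now irrelevant) condition on $\eneg\ell$ from the compatibility of $\calM$ with $\Phi'$ w.r.t.\ $\calE'$.

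The only point requiring genuine care — and hence the main obstacle — is verifying the extra compatibility condition for $\eneg\ell$ in the forward direction, i.e.\ that the case split on $\eneg\ell$ is exhaustive and matches the two branches of Definition~\ref{def:compatibility}. Here the nonemptiness of $\calM$ (condition~1, guaranteed since $\calM$ is a CWV) is essential: either every $I \in \calM$ satisfies $\ell$, in which case leaving $\eneg\ell$ out of $\Phi'$ makes $\eneg\ell \in \calE' \setminus \Phi'$ and condition~3 hold, or some $I \in \calM$ falsifies $\ell$, in which case putting $\eneg\ell$ into $\Phi'$ makes condition~2 hold. The two cases are mutually exclusive and exhaustive, so exactly one choice of $\Phi'$ works and it is compatible; all remaining verifications are immediate from the definitions.
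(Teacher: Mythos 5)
Your proposal is correct and takes essentially the same approach as the paper's proof: both rest on the observation that $\eneg\ell$ cannot occur in $\calR$, so the relevant epistemic reducts (and hence their answer sets) coincide, and both then settle compatibility by the same exhaustive case split on whether every $I \in \calM$ satisfies $\ell$, adding $\eneg\ell$ to the guess exactly when some model falsifies $\ell$. The paper merely packages the argument slightly differently, fixing $\Phi \subseteq \calE$ and relating it to $\Phi \cup \{\eneg\ell\}$ and treating the incompatible case first, but the substance is identical to yours.
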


\begin{proof}
  Note that the epistemic literal $\eneg \ell$ cannot appear anywhere
  in $\calR$, but the atom in $\ell$ can be in $\calA$ and appear in $\calR$. Now consider any guess $\Phi \subseteq \calE$, which we will relate to $\Phi' = \Phi \cup \{\eneg \ell\}$. The rules of the
  epistemic reducts $\Pi^\Phi$, $\Pi'^\Phi$, and $\Pi'^{\Phi'}$ coincide, thus $\answersets{\Pi^\Phi} =
  \answersets{\Pi'^{\Phi'}} = \calM$. 
  First, consider the case that $\calM$ is not compatible with $\Phi$, it is not hard to see that then $\calM$ is not compatible with $\Phi'$ either.
  Now assume that $\calM$ is compatible with $\Phi$, and thus $\calM \in \cwvs{\Pi}$. We distinguish two cases: (1) $\forall M \in \calM: M \models \ell$, in this case $\Phi$ is compatible with $\calM$ (with respect to $\calE'$) and therefore $\calM \in \cwvs{\Pi'}$; (2) $\exists M \in \calM: M \not\models \ell$, in this case $\Phi'$ is compatible with $\calM$ and thus $\calM \in \cwvs{\Pi'}$.
\end{proof}

From Propositions~\ref{prop:atomdomainelp} and~\ref{prop:elitdomain} in that
order, and their respective proofs, Theorem~\ref{thm:atomelitdomains} follows.

\section{Proof of Theorem~\ref{thm:sub}}
\label{sec:thm:sub:proof}

  Let $\Pi_r = (\calA, \calE, \calR_r)$ and $\Pi_s = (\calA, \calE, \calR_s)$ be
  the two single rule ELPs with $\calR_r = \{ r \}$ and $\calR_s = \{ s \}$, and
  let $r^\Phi$ and $s^\Phi$ be as in the previous section (cf.\
  the paragraph before Theorem~\ref{thm:elpruletautcharacterization}).

\smallskip
\noindent
\textbf{($\Leftarrow$)} 
Let $r,s$ be such that (a)--(e) hold. By Lemma~\ref{lemma:sub} we need to 
show 
that $\calS\calE_{\Pi_r}(\Phi) \subseteq \calS\calE_{\Pi_s}(\Phi)$, for
  all 
guesses $\Phi \subseteq \calE$.
If $\Phi$ is not consistent $\calS\calE_{\Pi_r}(\Phi) \subseteq \calS\calE_{\Pi_s}(\Phi)$ holds by trivial means.
Hence, let $\Phi \subseteq \calE$ be consistent. 
If $\Pi_s^\Phi=\emptyset$, we get $\calS\calE_{\Pi_s}(\Phi)=\calS_\calA$ 
and are done. 
So suppose $\Pi_s^\Phi\neq\emptyset$. 
By (e) and definition of the epistemic reduct, it follows that 
$\Pi_r^\Phi\neq\emptyset$. 
We next show that 
either $\Phi$ is not realizable in $\Pi_r$ 
(hence, 
$\calS\calE_{\Pi_r}(\Phi)=\emptyset$ and
$\calS\calE_{\Pi_r}(\Phi) \subseteq \calS\calE_{\Pi_s}(\Phi)$ obviously holds)
or 
$r^\Phi$ and $s^\Phi$
apply to conditions 
($\alpha$)--($\delta$) of Lemma~\ref{lemma:subrule}. 
In the latter case, 
due to this lemma then 
$\semods{\Pi_r^\Phi}\subseteq \semods{\Pi_s^\Phi}$,
and it remains to show that  $\calS\calE_{\Pi_r}(\Phi) \subseteq \calS\calE_{\Pi_s}(\Phi)$.
Towards a contradiction suppose this inclusion does not hold. In the light
of $\semods{\Pi_r^\Phi}\subseteq \semods{\Pi_s^\Phi}$, it must be the case
that 
$\calS\calE_{\Pi_r}(\Phi)=\semods{\Pi_r^\Phi}$ and 
$\calS\calE_{\Pi_s}(\Phi)=\emptyset$, i.e.\ $\Phi$ is not realizable in 
$\Pi_s^\Phi$. By definition this means that there is no subset of $\mods{\Pi_s^\Phi}$ that is compatible with $\Phi$. 
However, since $\semods{\Pi_r^\Phi}\subseteq\semods{\Pi_s^\Phi}$, 
it also follows that $\mods{\Pi_r^\Phi}\subseteq\mods{\Pi_s^\Phi}$.
Hence, $\Phi$ cannot be realizable in $\Pi_r^\Phi$ either and thus
$\calS\calE_{\Pi_r}(\Phi)=\emptyset$; a contradiction.

We continue to show that either $\Phi$ is not realizable in $\Pi_r$,
or 
$r^\Phi$ and $s^\Phi$ 
fulfill the conditions
($\alpha$)--($\delta$) of Lemma~\ref{lemma:subrule}.
Recall that we assume that $r^\Phi$ and $s^\Phi$ exist, i.e.\ the epistemic reducts are not empty.
Hence, 
$\Phi\subseteq \calE \setminus (\eneg F' \cup \eneg \neg G')$.
In fact, the epistemic reducts are of the form
$r^\Phi=
  A \leftarrow B, \neg C, \neg D^\Phi, \neg\neg E^\Phi, \neg\neg F, \neg G$
and 
$s^\Phi=
  A' \leftarrow B', \neg C', \neg D'^\Phi, \neg\neg E'^\Phi, \neg\neg F', \neg G'$, 
  where for $\Delta \in \{ D, D' \}$ we let
$\Delta^\Phi=\{d\in \Delta\mid \eneg d\notin \Phi\}$, and for
$\Gamma\in\{E,E'\}$ we let
$\Gamma^\Phi=\{e\in \Gamma\mid \eneg \neg e\notin \Phi\}$.

We first show that  condition ($\delta$) of Lemma~\ref{lemma:subrule} holds 
for $r^\Phi$ and $s^\Phi$.
Observe that for each $e\in E^\Phi$ with $e\in E'$, also $e\in E'^\Phi$. 
Hence from (d) we get
$E^\Phi\subseteq B'\cup E'^\Phi\cup F'$
and using $F\subseteq F'$ from (e)
we observe
$E^\Phi\cup F\subseteq B'\cup E'^\Phi\cup F'$ as desired.

We continue with condition ($\gamma$) of Lemma~\ref{lemma:subrule}.  Suppose it 
does not hold for $r^\Phi$ and $s^\Phi$.
We have to show that $\calS\calE_{\Pi_r}(\Phi)=\emptyset$.
Since the condition does not hold for $r^\Phi$ and $s^\Phi$, we have
$C\cup D^\Phi\cup G\not\subseteq C'\cup D'^\Phi\cup G'$.
Since $G\subseteq G'$ by (e),  
$C\cup D^\Phi\subseteq C'\cup D'^\Phi\cup G'$. Note that 
$D^\Phi\subseteq C'\cup D'^\Phi\cup G'$  follows from 
(c) and the fact that for each $d\in D^\Phi$ with $d\in D'$, also $d\in D'^\Phi$. 
Thus $C\not \subseteq C'\cup D'^\Phi\cup G'$. 
Let $c\in C\setminus (C'\cup D'^\Phi\cup G')$.
Since we assume by (c) $C \subseteq C'\cup D'\cup G'$, 
we have two observations:
(i) $\eneg c\in \Phi$ (otherwise $c\in D'^\Phi$);
(ii) due to condition (c$^*$), we need that
$r\not\rhd s$, since otherwise $C\subseteq C'\cup G'$ would hold.
From $r\not\rhd s$, it follows that there is neither an element in $(A\cup C\cup D) \setminus G'$ that is different from $c$ 
nor is there an element in $B\cup E$ different from $F'$.
We show that no subset of models of $r^\Phi$
is $\Phi$-compatible. 
First, observe that we would need for each such model that
all $f\in F'$ are set ot true and all $g\in G'$  are set to false.
Since $\eneg c \in \Phi$, we need among those models one that sets $c$ to false.
However, such a model does not exist:
the only ``positive'' atoms in $r^\Phi$ besides $c$ occur also in $G'$ and are already set to false;
all ``negative'' atoms in $r^\Phi$ are also in $F'$ and thus are set to true.
Hence,  $\calS\calE_{\Pi_r}(\Phi)=\emptyset$.

Next, we treat condition ($\beta$) of Lemma~\ref{lemma:subrule}.  Suppose it 
does not hold for $r^\Phi$ and $s^\Phi$.
Again, we need to show $\calS\calE_{\Pi_r}(\Phi)=\emptyset$.
Since the condition does not hold for $r^\Phi$ and $s^\Phi$, we have
$B \not\subseteq B'\cup E'^\Phi\cup F'$.
Let $b\in B\setminus (B'\cup E'^\Phi\cup F')$.
Since we assume by (b) $B \subseteq B'\cup E'\cup F'$, 
we have two observations:
(i) $\eneg \neg b\in \Phi$ (otherwise $b\in E'^\Phi$);
(ii) due to condition (b$^*$), we need 
$r\not\RHD s$ since otherwise $B\subseteq B'\cup F'$ would hold.
From $r\not\RHD s$, it follows that there is neither an element in $(B\cup E) \setminus F'$ that is different from $b$ 
nor is there an element in $A\cup C\cup D$ different from $G'$.
We show that no subset of models of $r^\Phi$
is $\Phi$-compatible. First, observe that we would need for each such model that
all $f\in F'$ are set ot true and all $g\in G'$  are set to false.
Since $\eneg \neg b \in \Phi$, we need among those models one that sets $b$ to true.
However, such a model does not exist:
all ``positive'' atoms in $r^\Phi$ occur also in $G'$ and are already set to false;
the only ``negative'' atoms in $r^\Phi$ different to $b$ are also in $F'$ and thus are set to true.
Hence,  $\calS\calE_{\Pi_r}(\Phi)=\emptyset$.

For Condition ($\beta$') of Lemma~\ref{lemma:subrule}, 
once more suppose
that is does not hold for $r^\Phi$ and $s^\Phi$. We have 
$X=A\cap (A'\setminus (C'\cup D'^\Phi\cup G'))\neq \emptyset$ and 
$B\not \subseteq B'$. 
It follows that 
$A\cap (A'\setminus (C'\cup G'))\neq \emptyset$, 
and, since (b$^*$') holds, $r\not\rhd s$.
On the other hand, since (b') holds, 
we know that for each $x\in X$, $x\in D'$ and, hence $\eneg x \in \Phi$ (since $X$ does not contain
an element from $D'^\Phi$).
Fix some $a\in X$.
From $r\not\rhd s$ it follows that there is neither an element in $(A\cup C\cup D) \setminus G'$ that is different from $a$ 
nor is there an element in $B\cup E$ different from $F'$.
We show that no subset of models of $r^\Phi$ 
is $\Phi$-compatible.
First, observe that we would need for each such model that
all $f\in F'$ are set ot true and all $g\in G'$  are set to false.
Since $\eneg a \in \Phi$, we need among those models one that sets $a$ to false.
However, such a model does not exist:
the only ``positive'' atoms in $r^\Phi$ besides $a$ occur also in $G'$ and are already set to false;
all ``negative'' atoms in $r^\Phi$ are also in $F'$ and thus are set to true.
Hence,  $\calS\calE_{\Pi_r}(\Phi)=\emptyset$.

It remains to show that condition ($\alpha$) of Lemma~\ref{lemma:subrule} applies
to $r^\Phi$ and $s^\Phi$. The argument is similar to the one for condition ($\gamma$).
So, suppose the condition does not hold for $r^\Phi$ and $s^\Phi$,
i.e.\ we have
$A\not\subseteq A'\cup C'\cup D'^\Phi\cup G'$.
Let $a\in A\setminus (A\cup C'\cup D'^\Phi\cup G')$.
Due to assumption (a),  
(i) $\eneg a\in \Phi$ (otherwise $a\in D'^\Phi$);
(ii) $r\not\rhd s$,
since otherwise $A\subseteq A'\cup C'\cup G'$ via  assumption (a$^*$).
From $r\not\rhd s$, it follows that there is neither an element in $(A\cup C\cup D) \setminus G'$ that is different from $a$ 
nor is there an element in $B\cup E$ different from $F'$.
Showing that no subset of models of $r^\Phi$ 
is $\Phi$-compatible
follows essentially the same
arguments as used for condition (c) above.
Hence, we end up with  $\calS\calE_{\Pi_r}(\Phi)=\emptyset$ as desired. This concludes 
the proof of the only-if direction.

\medskip
\noindent\textbf{($\Rightarrow$)} 
Suppose one of the conditions (a)--(e) is violated.
We show that there exists a guess $\Phi$, such that 
$\calS\calE_{\Pi_r}(\Phi) \not\subseteq \calS\calE_{\Pi_s}(\Phi)$.

Let us first consider 
(e) is violated and let
$\Phi=\calE\setminus (\eneg F' \cup \eneg \neg G')$.  
We get $r^\Phi=\emptyset$ and thus $\calS\calE_{\Pi_r}(\Phi)=\calS_\calA$, while
  $s^\Phi=A' \leftarrow B', \neg C', 
\neg\neg F', \neg G'$ (note that $D'$ and $E'$ dissappear since we assume
$s$ to be non-tautological).
It can be checked that, since $s$ is non-tautological, 
so is $s^\Phi$, and thus $\calS\calE_{\Pi_s}(\Phi)\subset \calS_\calA$. 
For the remaining cases, let us assume (e) holds. 
Note that then 
$r^\Phi$ is non-empty as well. In fact, it is of the form
$r^\Phi=
  A \leftarrow B, \neg C, \neg D^\Phi, \neg\neg E^\Phi, \neg\neg F, \neg G$
with 
$D^\Phi=\{d\in D\mid \eneg d\notin \Phi\}$ 
and 
$E^\Phi=\{e\in E\mid \eneg \neg e\notin \Phi\}$.

First, suppose (d) is violated.
We consider $\Psi=\Phi\setminus \{\eneg \neg e\}$ with 
$e\in E\setminus (B'\cup E'\cup F')$.  
Since $\Psi\subseteq \Phi$, $r^\Psi$ and $s^\Psi$ are non-empty.
In fact, $s^\Psi=s^\Phi$ since $e\notin E'$.
Moreover in $r^\Psi$ we have $e\in E^\Psi$. 
We observe 
that condition ($\delta$) of Lemma~\ref{lemma:subrule} is violated,
and, from the same lemma,
we get that $\semods{\Pi^\Psi_r}\not\subseteq \semods{\Pi_s^\Psi}$. 
It remains to show that $\mods{\Pi^\Psi_r}$ realizes $\Psi$.
Note that any interpretation that sets $e$ to false is a model 
of $r^\Psi$.  
We thus take $\calI$  as the set of interpretations 
where $e$ is set to false,
all $f\in F'$ are set to true, and 
all $g\in G'$ are set to false.
Indeed, $\calI\subseteq \mods{r^\Psi}$ since 
$\calI$ is a set of well-formed interpretations 
(recall that  $e \notin F'$, and $F'\cap G'=\emptyset$ since we assumed
$s$ to be non-tautological).
We show that the three conditions 
for $\calI$ being $\Psi$-compatible hold.
(i) $\calI\neq\emptyset$ by definition;
(ii) for any $\eneg \ell \in \Psi$, there exists 
an $I\in\calI$, such that $I\not\models \ell$:
by definition $\ell\notin \{\neg e\} \cup F' \cup \neg G'$.
For $\ell = e$, $I\not\models e$ hold for all $I\in\calI$.
Similary this is true for $\ell\in \neg F'$ and for all $\ell\in G'$.
For all other $\ell$ such that $\eneg \ell \in \Psi$, 
there is one $I\in\calI$ such that $I\not\models\ell$
by construction of $\calI$;
(iii) we show that for any $\eneg \ell \in \calE\setminus \Psi$, 
and for each each $I\in \calI$, 
$I\models \ell$. Indeed, by construction of $\Psi$, 
$\ell$ is either $\neg e$, $f\in F'$ or $\neg g\in \neg G'$;
by definition,
for
each $I\in\calI$ it holds that $I \models \ell$.

Suppose (c) is violated.
We consider $\Psi=\Phi\setminus \{\eneg c\}$ with 
$c\in (C\cup D)\setminus (C'\cup D'\cup G')$.  
Since $\Psi\subseteq \Phi$, $r^\Psi$ and $s^\Psi$ are non-empty.
In fact, $s^\Psi=s^\Phi$ since $c\notin D'$.
Moreover in $r^\Psi$ we have $c\in D^\Psi$, for the case $c\notin C$. 
Thus, $c\in C\cup D^\Psi$.
We observe 
that condition ($\gamma$) of Lemma~\ref{lemma:subrule} is thus violated,
and, by the same lemma,
we get that $\semods{\Pi_r^\Psi}\not\subseteq \semods{\Pi_s^\Psi}$. 
It remains to show that $\Psi$ is realizable in $\mods{\Pi_r^\Psi}$.
Indeed, any interpretation that sets $c$ to true is a model 
of $r^\Psi$.  
We take $\calI$  as the set of interpretations 
where
$c$ is set to true,
all $f\in F'$ are set to true, and 
all $g\in G'$ are set to false. 
As before, one can show that $\calI$ is a set of 
well-formed interpretations (in paricular, since $c\notin G'$) 
and
$\Psi$-compatible.

Suppose (c$^*$) is violated, i.e.\
we have 
$|(A\cup C\cup D)\setminus G'|>1$ or
$(B\cup E)\setminus F'\neq \emptyset$,
but
$C\not\subseteq C'\cup G'$.
Let $c\in C\setminus(C'\cup G')$. Hence, 
(1) there exists
at least one atom $x$ different from $c$ that appears in $A\cup C\cup D$ but not in $G'$, or
(2) there is some atom $y$ in $(B\cup E) \setminus F'$.
Let $\ell=x$ for case (1)
and $\ell=\neg y$ for case (2), and
consider $\Psi=\Phi\setminus \{\eneg \ell\}$.
Since $\Psi\subseteq \Phi$, $r^\Psi$ and $s^\Psi$ are non-empty.
We observe that
  $s^\Psi=A' \leftarrow B', \neg C', \neg D'^\Psi, \neg\neg E'^\Psi, 
\neg\neg F', \neg G'$ 
satisfies $c\notin D'^\Psi$ (for case (1) recall that $c\neq x$).
Hence, we have $c\in C$ but $c\notin C'\cup D'^\Psi \cup  G'$.
As before, 
condition ($\gamma$) 
of Lemma~\ref{lemma:subrule} is thus violated,
and we conclude
$\semods{\Pi_r^\Psi}\not\subseteq \semods{\Pi_s^\Psi}$. 
It remains to show that $\mods{\Pi_r^\Psi}$ realizes $\Psi$.
Here, any interpretation $I$ that satisfies $\ell$
(i.e.\ where $x \in I$ in case (1); otherwise $y \notin I$)
is a model 
of $r^\Psi$.  
We take $\calI$ as the set of interpretations 
in which 
$\ell$ is true,
all $f\in F'$ are set to true, and 
all $g\in G'$ are set to false. 
Indeed, $\calI$ contains well-formed interpretations, since 
$x\notin G'$ and likewise, $y\notin F'$.
As before, one can show that $\calI$ is $\Psi$-compatible.

Suppose (b) is violated. 
We consider $\Psi=\Phi\setminus \{\eneg \neg b\}$ with 
$b\in B\setminus (B'\cup E'\cup F')$.  
Since $\Psi\subseteq \Phi$, $r^\Psi$ and $s^\Psi$ are non-empty.
In fact, $s^\Psi=s^\Phi$ since $b\notin E'$.
Thus, condition ($\beta$) of Lemma~\ref{lemma:subrule} is violated,
and, by the same lemma, 
we have $\semods{\Pi_r^\Psi}\not\subseteq \semods{\Pi_s^\Psi}$. 
It remains to show that $\mods{\Pi_r^\Psi}$ realizes $\Psi$.
Indeed, any interpretation that sets $b$ to false is a model 
of $r^\Psi$.  
We take $\calI$  as the set of interpretations 
where
$b$ is set to false,
all $f\in F'$ are set to true, and 
all $g\in G'$ are set to false. 
This is again sound since $b\notin F'$.
As before, one can show that $\calI$ is $\Psi$-compatible.

Suppose (b$^*$) is violated, i.e.\
we have 
$(A\cup C\cup D)\setminus G'\neq \emptyset$ or
$|(B\cup E)\setminus F'|> 1$,
but
$B\not\subseteq B'\cup F'$.
Let $b\in B\setminus(B'\cup F')$. Hence, there exists
(1) an atom $x$ in $(A\cup C\cup D) \setminus G'$, or
(2) an atom $y$ different to $b$ that appears in $(B\cup E)\setminus F'$.
Let $\ell=x$ in case (1)
and $\ell=\neg y$ in case (2), and
consider $\Psi=\Phi\setminus \{\eneg \ell\}$.
Since $\Psi\subseteq \Phi$, $r^\Psi$ and $s^\Psi$ are non-empty.
We observe that
  $s^\Psi=A' \leftarrow B', \neg C', \neg D'^\Psi, \neg\neg E'^\Psi, 
\neg\neg F', \neg G'$ 
satisfies $b\notin E'^\Psi$ (for case (2) recall that $b\neq y$).
Hence, we have $b\in B$ but $b\notin B'\cup E'^\Psi \cup  F'$.
As before, 
condition ($\beta$) 
of Lemma~\ref{lemma:subrule} is violated,
and we conclude
$\semods{\Pi_r^\Psi}\not\subseteq \semods{\Pi_s^\Psi}$. 
It remains to show that $\mods{\Pi_r^\Psi}$ realizes $\Psi$.
Here, any interpretation that sets $\ell$ to true 
(i.e.\ $x$ is set true in case (1); otherwise $y$ is set to false)
is a model 
of $r^\Psi$.  
We take $\calI$  as the set of all interpretations 
wherein
$\ell$ is true,
all $f\in F'$ are set to true, and 
all $g\in G'$ are set to false. 
Indeed, $\calI$ contains well-formed interpretations, since 
$x\notin G'$ and likewise, $y\notin F'$.
One can show that $\calI$ is $\Psi$-compatible by the usual argument.

Suppose (b') is violated, i.e.\ 
we have 
$X=A \cap (A'\setminus (C'\cup D'\cup G'))\neq\emptyset$ and $B\not\subseteq B'$. 
Consider $\Psi=\Phi\setminus \{\eneg a\}$ 
with $a\in X$. In particular, this implies $a\notin D'$.
Since $\Psi\subseteq \Phi$, $r^\Psi$ and $s^\Psi$ are non-empty.
In fact, $s^\Psi=s^\Phi$ since $a\notin D'$.
We observe 
that condition ($\beta'$) of Lemma~\ref{lemma:subrule} is thus violated,
and, by the same lemma, 
we get that $\semods{\Pi_r^\Psi}\not\subseteq \semods{\Pi_s^\Psi}$. 
It remains to show that $\mods{\Pi_r^\Psi}$ realizes $\Psi$.
Indeed, any interpretation that sets $a$ to true is a model 
of $r^\Psi$.  
We take $\calI$  as the set of interpretations 
where
$a$ is set to true,
all $f\in F'$ are set to true, and 
all $g\in G'$ are set to false. 
This is again sound since $a\notin G'$ (by construction of $X$).

Suppose (b$^*$') is violated, i.e.\ 
$|(A\cup C\cup D)\setminus G'|>1$ or
$(B\cup E)\setminus F'\neq \emptyset$,
as well as 
$X=A \cap (A'\setminus (C'\cup G'))\neq\emptyset$ and $B\not\subseteq B'$. 
For this case, we consider the rules $r^\Phi$ and $s^\Phi$ directly.
In fact, for these rules, $X$ and $B\not\subseteq B'$ 
immediately imply that condition ($\beta'$) of
Lemma~\ref{lemma:subrule} is violated,
and we conclude
$\semods{\Pi_r^\Phi}\not\subseteq \semods{\Pi_s^\Phi}$. 
We show that $\calM = \mods{\Pi_r^\Phi}$ realizes $\Phi$.
Note that in this particular case, $r^\Phi$ has at least two literals not
``absorbed'' by $F'$ and $G'$; viz.\
$x\in X$ and some $\ell\neq x$ 
from 
$(A\cup C\cup D)\setminus (G'\cup \{x\})$ or
from
$(B\cup E)\setminus F'$.
If we consider the set $\calM' \subseteq \calM$ of models 
where all $f\in F'$ are set to true, and 
all $g\in G'$ are set to false, we observe that for all other atoms $y$, 
there exists at least one model in $\calM'$ where $y$ is 
set to false 
and another one where $y$ is set to true.
In particular, there is an $M\in\calM$ where $x$ is false (namely the one where $\ell$ is set to true),
and likewise there is an $M\in\calM$ where $\ell$ is false (namely the one where $x$ is set to true).

Cases (a) and (a$^*$) are similar to (c) and (c$^*$): simply apply
Lemma~\ref{lemma:subrule} via condition ($\alpha$) instead of ($\gamma$).

\clearpage
\bibliographystyle{aaai}
\bibliography{references}

\end{document}